\documentclass[a4paper]{IEEEtran}
%
%
%

\usepackage{graphicx}
\usepackage{amsmath, amsthm, amssymb, amsfonts, array}
\usepackage{colortbl}
\usepackage[normalem]{ulem}
\usepackage{wasysym}
 \usepackage{ulem} 
 
\usepackage[para,online,flushleft]{threeparttable}

\newtheorem{thm}{Theorem}
\newtheorem{asmt}{Assumption}
\newtheorem{prop}{Proposition}



\usepackage[usenames,dvipsnames]{xcolor}
\usepackage{marginnote}

\begin{document}
\title{Demystifying Competition and Cooperation Dynamics of the Aerial mmWave Access Market}

\author{Olga Galinina, Leonardo Militano, Sergey Andreev, Alexander Pyattaev, Kerstin Johnsson, \\
Antonino Orsino, Giuseppe Araniti, Antonio Iera, Mischa Dohler, and Yevgeni Koucheryavy
\thanks{O.~Galinina, S.~Andreev, and Y.~Koucheryavy are with the Department of Electronics and Communications Engineering, Tampere University of Technology, Finland.}
\thanks{L.~Militano, A.~Orsino, G.~Araniti, and A.~Iera are with the ARTS Laboratory, DIIES Department, University Mediterranea of Reggio Calabria,~Italy.}
\thanks{A.~Pyattaev is with YL-Verkot Oy, Finland.}
\thanks{K.~Johnsson is with Intel Corporation, Santa Clara, CA, USA.}
\thanks{M.~Dohler is with the Department of Informatics, King's College London, UK; and Worldsensing, UK and Spain.}
}
\maketitle

\begin{abstract}
Cellular has always relied on static deployments for providing wireless access. However, even the emerging fifth-generation (5G) networks may face difficulty in supporting the increased traffic demand with rigid, fixed infrastructure without substantial over-provisioning. This is particularly true for spontaneous large-scale events that require service providers to augment capacity of their networks quickly. Today, the use of \textit{aerial} devices equipped with high-rate radio access capabilities has the potential to offer the much needed "on-demand" capacity boost. Conversely, it also threatens to rattle the long-standing business strategies of wireless operators, especially as the "gold rush" for cheaper millimeter wave (mmWave) spectrum lowers the market entry barriers. However, the intricate structure of this new market presently remains a mystery. This paper sheds light on competition and cooperation behavior of dissimilar aerial mmWave access suppliers, concurrently employing licensed and license-exempt frequency bands, by modeling it as a vertically differentiated market where customers have varying preferences in price and quality. To understand viable service provider strategies, we begin with constructing the Nash equilibrium for the initial market competition by employing the Bertrand and Cournot games. We then conduct a unique assessment of short-term market dynamics, where two licensed-band service providers may cooperate to improve their competition positions against the unlicensed-band counterpart intruding the market. Our unprecedented analysis studies the effects of various market interactions, price-driven demand evolution, and dynamic profit balance in this novel type of ecosystem.
\end{abstract}
\begin{IEEEkeywords}
5G systems, mmWave technology, aerial access points, competition and cooperation behavior, vertically differentiated market, Bertrand and Cournot models, dynamic games.
\end{IEEEkeywords}

\IEEEpeerreviewmaketitle




\section{Introduction and background}

\subsection{5G standardization update}

The global research on fifth generation (5G) radio access systems has essentially been completed and the respective standardization has actively begun. In light of the recently published IMT vision for 2020 and beyond~\cite{IMT15}, there is now a common consensus that 5G will address scenarios that have difficulty to be served with existing technology, including massive, ultra-reliable, and low latency machine type communications~\cite{Pal16}, as well as enhanced mobile broadband (eMBB) use cases~\cite{Gal15}. For over 4 billion of mobile subscribers in the world, the latter promises to deliver the peak data rates of 10 Gbit/s, which is 100x growth over the corresponding figures in fourth generation (4G) networks.

However, this radical improvement over 4G technology is difficult to materialize without moving up in frequency to harness nontraditional millimeter wave (mmWave) spectrum. Indeed, compared to today's ultra-high frequencies, which comprise only about 1 percent of all regulated spectrum~\cite{Rap14}, mmWave band is plentiful and remains lightly licensed~\cite{Yos16}. Therefore, it could be made available to service operators worldwide, further facilitated by the fact that mmWave frequencies should be 10-100x cheaper per Hz than the conventional spectrum below 3 GHz~\cite{And15}. Hence, mmWave systems may bring along over an order of magnitude increase in capacity even for the existing 4G cell densities~\cite{Ran14}.

Driven by IEEE 802.11ad and emerging 802.11ay specifications, unlicensed mmWave bands at 60 GHz become increasingly employed by commercial Wi-Fi products~\cite{Gal16}. For licensed cellular use, the attractive candidates are 28, 39, and 72 GHz~\cite{Yos16}, and 3GPP accelerate their efforts to ratify a new, non-backward compatible, radio access technology (RAT) operating in mmWave frequencies. The development of this 5G New RAT comprises Phase I (by Sept 2018) optimized primarily for the eMBB scenarios, which then opens door to Phase II (by Dec 2019) supporting all 5G use cases~\cite{IMT15}. Therefore, already in 2018-2019 we expect to see the early commercial deployments of the "initial" New RAT operating in frequencies under 40 GHz. 

\subsection{Dynamic 5G access infrastructure}

The rapid developments in 5G standards provide service operators with efficient means to deploy ultra-dense networks in mmWave bands~\cite{Bal15}, which cope better with the unprecedented acceleration in global mobile traffic demand that is about doubling every year. However, throughout the 40 years of its history, the cellular industry has primarily relied on static radio access network (RAN) deployments. This has led to rigid, fixed network architectures, where it takes a long time to deploy additional access points due to lack of available sites and cumbersome installation procedures. Hence, service providers are actively seeking for alternative technologies and system design options that enable non-rigid placement of access nodes to better accommodate the varying space-time user demand~\cite{Yal16}.

It has recently been understood that autonomous flying robots, nicknamed \textit{drones}, have the potential to quickly deploy dedicated communication networks~\cite{Flo15}, thus bringing access supply to where the demand actually is. Facilitated by miniaturization and cost reduction of electronic components, unmanned aerial vehicles (UAVs) and low-altitude platforms (LAPs) equipped with wireless transceivers may soon lay the foundation for truly dynamic RAN solutions~\cite{Agu16}. Subject to suitable regulatory frameworks, they can not only augment wireless capacity and coverage by meshing with conventional RAN systems and dynamically "patching" them where needed, but also share infrastructure with cargo delivery drones thus bridging across transportation and communication markets.

Owing to their agility and mobility, rapidly deployable drone small cells (or simply \textit{drone cells}) will be particularly useful in 5G networks during unexpected and temporary events, such as large-scale mass outdoor happenings that create unpredictable access demand fluctuations. The examples of such scenarios, as envisioned by~\cite{METIS15}, include an open air festival or a marathon use case with very high densities of users and their connected handheld, wearable devices that altogether produce huge amounts of aggregated traffic. In such areas of interest where conventional RAN infrastructure may be sparse and under-dimensioned, aerial access systems can urgently assist the existing cellular networks by offering the much needed capacity boost.

Fundamentally, the use of drone cells for "on-demand" RAN densification might in the future lead to very different dimensioning of our networks where they are no longer planned for peak loads, but instead provisioned for median loading. Many important challenges have already been resolved to advance this thinking~\cite{Mozaffari2015}, \cite{Xia16}, and in fact Google has recently announced their strategy to offer aerial wireless access in emerging markets\footnote{Alan Weissberger, "Google's Internet Access for Emerging Markets"}. Inspired by these promising scientific, technological, and business advances, \textit{in this work} we put forward the vision of aerial mmWave access networks employing mmWave RAT in high demand and overloaded situations. We argue that mmWave technology is particularly attractive for aerial access due to a number of factors, such as simpler air interface design exploiting large bandwidths~\cite{Gho14} and higher chances of having a line-of-sight link to the user on the ground, which is crucial for efficient mmWave operation.

As follows from the above, aerial mmWave access systems may soon deliver very high spectral efficiencies and ultra-broadband service experience, hence attracting various operators to leverage the cheaper mmWave spectrum. However, the structure of this \textit{unprecedented market} is not nearly well understood and remains in prompt need of a comprehensive research perspective. Fortunately, the broad field of game theory supplies us with a rich set of tools to address its dynamics and study the intricate interactions between the existing players that may want to adapt their respective business strategies as well as the new market stakeholders. We continue with a review of our selected game theoretic mechanisms in what follows.

\subsection{Game theory for aerial mmWave market}

The ongoing evolution of wireless networks fueled by the recent advent of novel access technologies and complex heterogeneous architectures calls for applying advanced game theoretic tools to help design and model the appropriate 5G-ready solutions~\cite{han2011game}. We are convinced that the emerging 5G mmWave access market structure and interactions remain a vastly unexplored area as of today. Accordingly, various market players concurrently employing licensed and license-exempt mmWave spectrum will have to take into account the evolved expectations of their customers when offering new services. Therefore, we target a careful modeling of viable business strategies in the aerial mmWave access market by systematically bringing into focus the aspects of the quality of service (QoS) and pricing, mindful of customer preferences.

\vspace{-15px}
\begin{figure} [!ht]
  \begin{center}
    \includegraphics[width=0.51\textwidth]{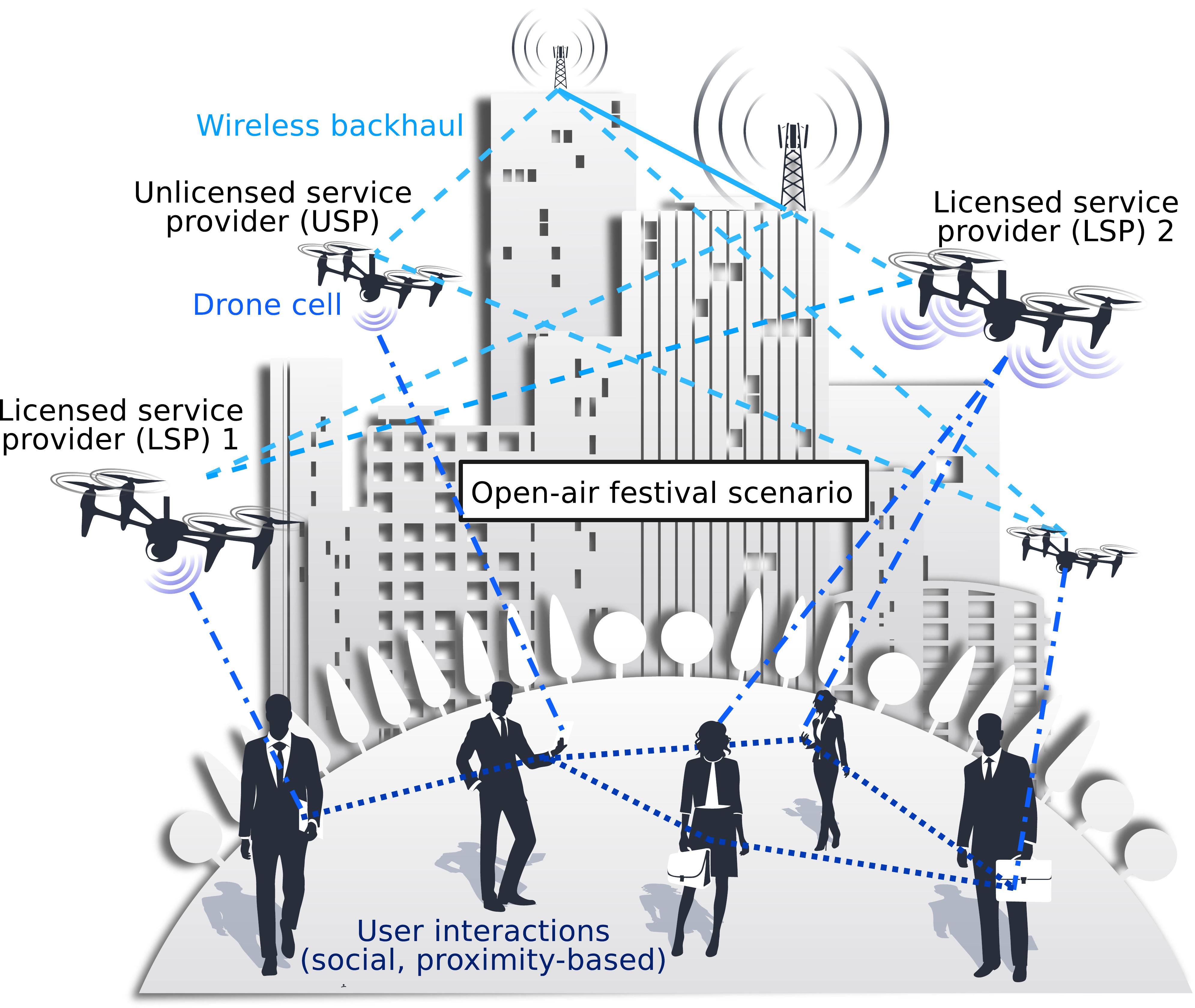}
  \end{center}
  \caption{Our motivating scenario with 5G drone cells.}
  \label{fig:scenario}
\end{figure} 

Our first line of research considers a \textit{duopoly} setting where two licensed-band service providers (LSPs) enter the aerial mmWave access market and compete with each other to maximize their profits (see Fig.~\ref{fig:scenario}). In the past, market competition and pricing models have already attracted significant attention across the wireless community, as reported in~\cite{ren2013entry} and \cite{niyato2008competitive}. To the best of our knowledge, however, no prior contribution in existing literature on game theory has been considering \textit{vertical differentiation} in the context of wireless networking where market players differentiate their products in terms of quality, as was studied in~\cite{shaked1982relaxing}, \cite{gabszewicz1979price}, and \cite{shaked1983natural}. 

This methodology is particularly interesting for the purposes of our analysis since~\cite{shaked1982relaxing} showed that for some parameters the game has a unique subgame perfect equilibrium at which only two players enter the market. Furthermore, these two suppliers choose to offer differentiated products and earn positive profits at equilibrium (if the production costs are disregarded). Different from the work in~\cite{shaked1982relaxing}, we assume that the LSPs have already decided to enter the market and are instead interested in defining the optimal quality of their products as well as the corresponding prices/quantities~\cite{motta1993endogenous}. 

Therefore, our methodology at its initial stage studies market equilibria in which the LSPs first determine the specification of their offered products (i.e., mobile subscriptions) and then decide on the prices or the quantities of the products they sell according to a \textit{Bertrand} or a \textit{Cournot} competition model, respectively~\cite{bonanno1986vertical}. Another finding of this paper is in understanding how the outcome of the initial LSP competition is influenced by the customer preferences (so-called \textit{taste parameter}) willing to pay more or less for a higher quality product. After this is done, we introduce another entity into our system, an unlicensed-band service provider (USP) operating in free-to-use mmWave spectrum, and then study the \textit{dynamics} of the resulting market.
 
At the dynamic stage of our considered game, the customers are allowed to prefer the USP to the LSP for wireless access services, while previously inactive users can activate to connect to the USP. The decision strategy of the customers is tightly coupled with their individual \textit{utility} perception, and the resulting user decisions determine the \textit{strategy revision protocol}~\cite{sandholm2009pairwise}. The latter is captured in our methodology as a combination of different subjective and social factors, such as "curiosity", "dissatisfaction", and "gossiping". In practice, the competing market players attempt to find countermeasures against customer decisions that might penalize their profits. While for the USP we model its ability to adapt the offered price over time according to the user choices, the LSPs do not have such luxury as they are bound by a long-term contractual agreement with their customers. 

Correspondingly, the LSPs have to seek ways to maximize the chances of meeting their service-level agreements (SLAs) and thus prevent users from changing the provider \textit{in the long run}. To this aim, we advocate the adoption of long-term cooperative agreements between the LSPs  to improve the QoS perception levels for their customers. Such cooperation is particularly helpful if a customer could be served by an assisting LSP at much higher spectral efficiency. Our analysis shows that these forms of cooperation between the LSPs improve their profits as compared to the non-cooperative case. Then, the question is how to share the surplus among the cooperating players, which requires \textit{fairness}-centric game theoretic solutions~\cite{tijs1986game}. In this work, we adopt the widely-accepted Shapley value~\cite{shapley1952value} for its intrinsic capability to capture the contributions of individual players in a coalition~\cite{militano2016enhancing}.

Finally, we note that market evolution and price dynamics were investigated in other contexts as well, including heterogeneous small cell networks~\cite{rose2014pricing} and advanced offloading techniques~\cite{zhu2014pricing}. Evolutionary games~\cite{hofbauer2003evolutionary}, \cite{Wei97} have also been receiving attention in the literature, with applications to competing wireless operators~\cite{korcak2012competition} and user groups sharing the limited access bandwidth~\cite{niyato2009dynamics}. In contrast to the past work, our game formulations uniquely study the evolution of market shares for two LSPs and one USP -- the minimal feasible set of players in the subject market -- according to the customer decisions as well as the ability of the USP to adapt its pricing strategy over time. Technically, we construct a system of differential equations that characterize the dynamics of our market by analyzing the situations when the LSPs avoid or enter into cooperation while competing with the USP.

\subsection{Contributions of this work}

We envision that the aerial mmWave access technologies will play a pivotal role in the emerging 5G market. To leverage their full potential, they will need to be accompanied by new powerful methodologies able to capture the intricate market dynamics and characterize the practical benefits for all the involved stakeholders. In this paper, we offer the first comprehensive attempt to analyze and understand the competition and cooperation behavior of dissimilar players within this novel ecosystem, which boils down to the following \textit{four major contributions}. 

\begin{enumerate}
\item \textbf{Initial market formulation.} We introduce viable strategies of the LSPs (licensed-band service providers) when operating in the new market of the aerial mmWave access systems. A vertically differentiated market is comprehensively modeled where two LSPs first determine the specification of their offered services, and then decide on the prices or the quantities of the services they offer according to the Bertrand or Cournot competition models. Our proposed formulation enables a valuable comparison of the equilibrium points established with the two considered initial games.

\item \textbf{Analysis of market dynamics.} We construct an unprecedented analytical framework that captures the short-term market dynamics featuring dissimilar players, two LSPs and one USP (unlicensed-band service provider) as the latter enters the subject market. Our dynamic game theoretic methodology allows to carefully follow the cooperative interactions between the LSPs and their competition against the USP counterpart, as well as thoroughly characterize both price- and sociality-driven evolution of customer preferences when making the service provider selection.

\item \textbf{Numerical performance evaluation.} We systematically report important numerical results for the considered market players under the realistic assumptions on the mmWave channel modeling, signal propagation, and aerial access point (AAP) operation. Our findings open door to an exciting technology innovation when two AAPs belonging to different LSPs serve customers in cooperation, thus allowing to share the access infrastructure between multiple operators. A detailed study of temporal evolution in market shares and resulting profits is delivered for both the Bertrand and Cournot competition models.

\item \textbf{Large-scale system-level validation.} We verify our extensive analytical results with in-depth system-level simulations that are grounded in reality and constructed after the meaningful real-world scenarios. The employed simulation platform integrates considerable knowledge behind the principles of mmWave system operation as well as employs substantiated and adequate assumptions on the AAP deployment behavior. The simulation results are made available to validate the core assumptions of our proposed analytical framework, assess the true market dynamics, and support the key practical learnings.
\end{enumerate}



\section{Proposed system model} \label{system_model}
In this section, we outline the considered system as well as introduce its main assumptions. 

\subsection{Scenario of interest}

We focus on an open air, densely crowded scenario (e.g., a festival or any other outdoor event) within a certain area of interest, where participants are located on the ground. The mobile devices of these people -- \textit{potentially active} in terms of wireless access -- are equipped with radio transceivers and are able to operate on either \textit{licensed} or \textit{unlicensed} mmWave frequencies. 
\begin{asmt}
For the purposes of mmWave channel modeling, we represent a human body as a cylinder of height $h_b$ and diameter $2r_b$, and further assume that positions of the centers of cylinders are spatially distributed with the density of $\mu$. We note that not all of our mass event participants may use mmWave radios and also assume a certain share of potentially active ones, termed \emph{customers}. Further, the average elevation of the mobile devices equals $h_d$, and the projections of the \textit{active devices} on the ground are also spatially distributed with the respective density of $\mu_0 << \mu$. 
\end{asmt}

Mobile devices of customers on the ground may be served by aerial mmWave access points (or drone cells) that belong to a certain owner (i.e., service provider). These flying access nodes may be deployed temporarily within the area of interest to support very high but relatively short-term connectivity demand on the ground during the event in question. 
\begin{asmt}
We assume that $N_i$ AAPs (aerial access points) within a certain service provider's deployment $i$ (termed here a fleet) are identical, \textit{uniformly placed} over the area, keep the same altitude $h_i$, and operate over the spectrum bandwidth $B_i$. An AAP may serve a customer's device with a beam of half-angle $\phi$ (half of aperture), which is directed at the inclination angle $\beta \leq \beta_{\max}$, where $\beta_{\max} < \pi/2$ follows from physical restrictions (i.e., the angle between the vertical and the beam cone axes cannot exceed a certain maximum).
\end{asmt}

If a certain device is associated with the fleet $i$ based on a particular customer agreement (the agreements are all indivisible and mutually exclusive), it will be served by the closest AAP of the corresponding fleet. Let us consider a link of the tagged device associated with the fleet $i$. We assume that one of the following alternative situations may take place (see e.g.,~\cite{Mozaffari2015}):
\begin{itemize}
\item there is line-of-sight (LOS) propagation between the device and the corresponding AAP,
\item the device is blocked by the body of its owner or another person, but there still exists a sufficiently strong path for the reflected beam,
\item the device is fully blocked and thus has no usable connection to the AAP.
\end{itemize}

In order to address the above options, we introduce the following assumptions.
\begin{asmt}\label{asm:powers}
Signal propagates according to a free-space model and hence, depending on the link between the AAP and the device, the received power at distance $d \leq h \tan \beta_{\max}$ may assume one of the following values:
\begin{equation}
\left\{ \begin{array}{l}
p_{rx} =  p_{tx}G_{a}\frac{G_i}{h^2+d^2},\text{ if LOS exists},\\
p_{rx} = p_{tx} G_{NLOS} G_{a} \frac{G_i}{h^2+d^2},\text{ if no LOS, but signal is reflected},\\
p_{rx} = 0, \text{ if no signal at all},
\end{array} \right.
\end{equation}
where $p_{tx}$ is the fixed transmit power, $G_{a} = \frac{2}{1-\cos \phi}$ is the antenna gain, $G_i = \left( \frac{c}{4\pi f_i}\right)^2$ is a path gain constant, $f$ is the signal frequency, $c=3 \cdot 10^8$ is the speed of light. Here, $G_{NLOS}$ is a constant decrease in signal power due to reflection (additional attenuation factor as in~\cite{Mozaffari2015}), which is assumed to be fixed across the network.

\end{asmt}
Based on the received power, we may estimate the actual throughput of the device connected to the AAP of the fleet $i$ and located at the distance $d$ according to the Shannon's formula:
\begin{equation}
T = \Delta B_i \log_2\left(1+ \frac{p_{rx}}{N_0}\right),
\end{equation}
where $N_0$ is the noise plus interference level and $\Delta B_i$ is a share of effective bandwidth available to the link in question. We assume that each device always uses the full channel bandwidth (which may vary depending on the choice of a provider), while the time is shared between all the connected devices equally (e.g., as a result of the round-robin scheduling as in~\cite{zhu2014pricing}). Owing to the beamforming capability on the mmWave AAPs, we can analyze our scenario as noise-limited, and thus $N_0$ is assumed constant.

\begin{asmt}\label{asm4}
In addition, we assume that the number of AAPs within each fleet is sufficient to cover the area of interest. In particular, the distance between the two AAPs is $2R_{\text{AAP}}$, $ R_{\text{AAP}} \leq h \tan \beta_{\max}$. This readily implies a lower bound on the number of AAPs (for a hexagonal grid): 
\begin{equation}
\begin{array}{l}
\!N_{\min} \!=\! \left \lceil \frac{R}{2 h \tan \beta_{\max}}\!+\!1\!-\!\sqrt{2} \right \rceil  \left \lceil \frac{R\sqrt{\frac{4}{3}}}{2 h \tan \beta_{\max}\!+\!1\!-\!\sqrt{\frac{8}{3}}} \right \rceil \!+\!\\
\! +\!\left \lfloor \left \lceil \frac{R\sqrt{\frac{4}{3}}}{2 h \tan \beta_{\max}+1-\sqrt{\frac{8}{3}}} \right \rceil \right \rfloor. \!\!\!\!
\end{array} \label{eqn:drone_number}
\end{equation}
\end{asmt}
The above expression (or similar, depending on selected deployment) is merely an example and may be obtained based on straightforward geometric reasoning; thus, we omit its derivation here to save space.

\subsection{Main players and their interactions}
In this work, we differentiate between the following \textit{three types} of players in our system: (i) multiple customers demanding service, (ii) two LSPs (licensed-band service providers) operating in \textit{licensed} mmWave spectrum and "supposedly" responsible for the announced QoS, and (iii) one USP (unlicensed-band service provider) operating in free-to-use \textit{unlicensed} mmWave frequencies and aiming to offer best-effort service. The interactions of interest between these players also belong to the following \textit{three types}:
\begin{description}
\item[ A.] A tagged customer may be served by its own LSP paying a \textit{fixed} price $p_i,i=1,2$ (per unit time). For that matter, the LSP provides the customer with a mobile subscription (e.g., a SIM-card) in advance, where the choice of an LSP is made based on the individual customer preferences (see below).
\item[ B.] Another alternative for the customer is to be served by the USP paying a \textit{dynamic} price $p_0(t)$. We note that all of our devices with mmWave radios may be served by the USP, including those that do not have an LSP subscription provided in advance.
\item[ C.] Finally, an LSP may request \textit{assistance} from another LSP and utilize its infrastructure to serve a certain customer e.g., if the latter is geometrically closer to other provider's AAP (see Fig.~\ref{fig:beamforming} for details). In this case, the served customer is unaware of such cooperation and continues to pay its regular price $p_i,i=1,2$. 
The LSPs may share the resulting surplus according to a certain partitioning model e.g., using the Shapley vector~\cite{shapley1952value}. 
\end{description}

\begin{figure} [!ht]
  \begin{center}
    \includegraphics[width=0.5\textwidth]{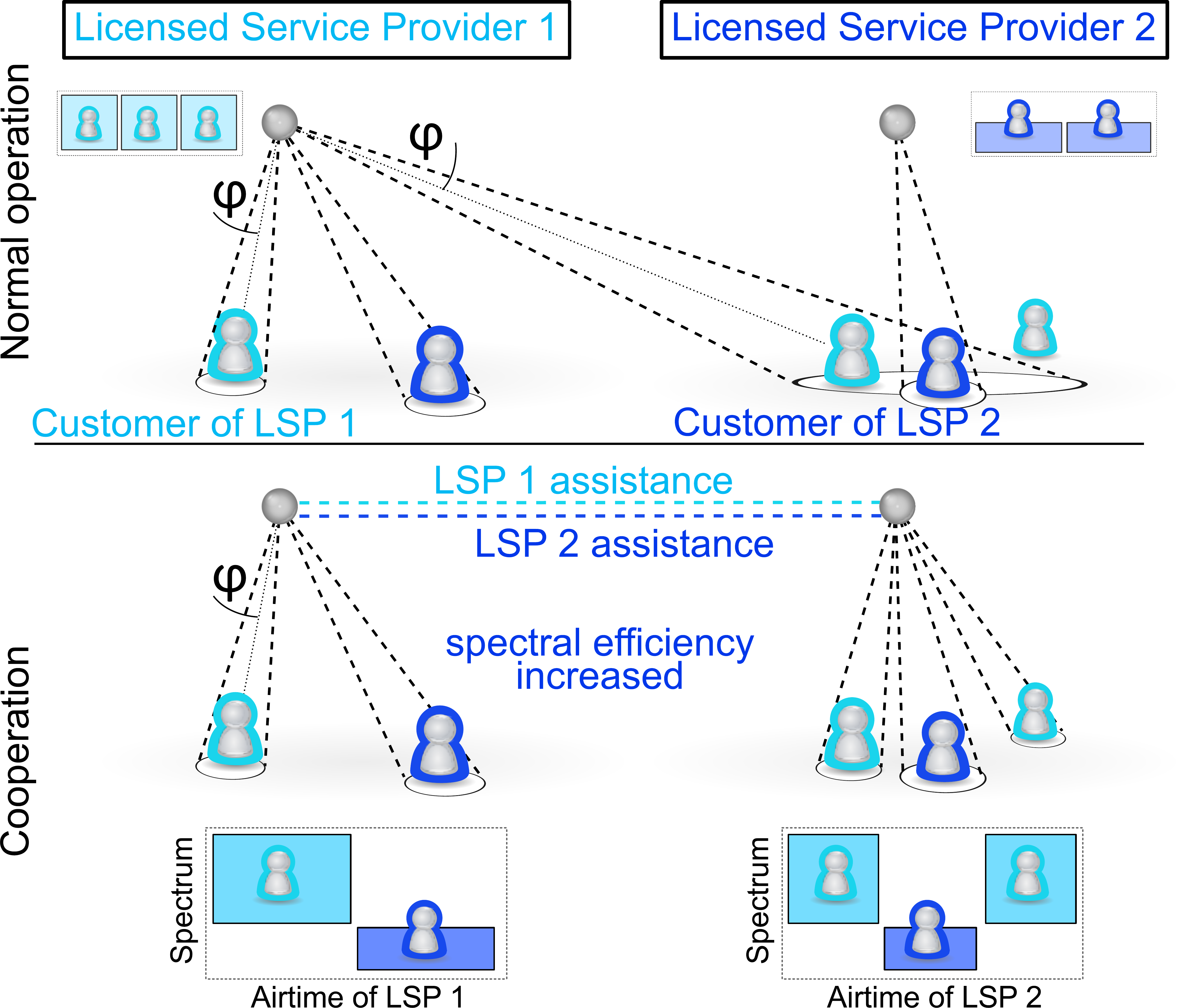}
  \end{center}
  \caption{Cooperation between service providers: mutual assistance of the LSPs.}
  \label{fig:beamforming}
\end{figure} 

\subsection{Strategies and payoffs}
Below we discuss the strategies and the corresponding payoffs of our three types of players.

\begin{enumerate}
\item \textbf{Both LSPs} aim to support the desired QoS and keep their subscribers satisfied. 
To do so, one LSP may offload some of its customers (connections) to another LSP, and thus have them served on its own spectrum, but using the airtime of the assisting LSP. In such a situation, the assisting LSP's infrastructure will effectively act as a "proxy" AAP.

\textbf{Strategy of the LSP $i$} includes offering prices $p_{i}$ per unit time for its customers as well as announcing a certain QoS (e.g., throughput) at the initial stage of the market game.

\item \textbf{USP's} profit comprises the fees paid by all of its customers (potentially including former customers of either LSP and those without the LSP subscriptions).

\textbf{Strategy of the USP} includes offering price $p_0(t)$ for providing best-effort service to its customers. 

\item Each of $N$ \textit{active} customers aims at maximizing its utility $U(T,p)$, that is, a function of price and experienced throughput. 

\textbf{Strategy of a customer} includes choosing a service provider (e.g., its own LSP, if applicable, or the USP). Accordingly, a customer may prefer its own LSP (in case a SIM-card has been purchased in advance) or connect to the USP instead. 

\end{enumerate}

\subsection{Customer's preferences}
In contrast to most past work, we assume in this paper that customers are not identical in their preferences, which conveniently reflects their different financial capabilities, tastes, or priorities. We thus consider a vertically differentiated market~\cite{Lancaster1990}, where potential customers agree on ranking diverse products (in our case, mmWave access offers) in the order of quality preference according to some utility function. However, their willingness to pay remains variable due to e.g., budget restrictions. 

\begin{asmt} 
In our scenario, we introduce the following utility function:
\begin{equation}
U(\theta, T, p) =  \theta \cdot s(T) - p, 
\label{eqn:utility} 
\end{equation}
where $s$ is an increasing quality function determined by the throughput $T$ that the customer may observe while making a decision (i.e., either announced by the provider at the initial stage or actually experienced during operation), $p$ is the price per airtime, and $\theta$ is the taste parameter~\cite{motta1993endogenous}. The latter is a typical measure of the buyer's preference in the differentiated markets. The higher its taste parameter is, the more a customer is willing to pay for a better quality service. Generally, zero utility value corresponds to not buying the product, and thus if $U(\theta, T, p)\leq0$, then a customer would decide to refrain from purchasing the mmWave access service. We assume that $\theta$ is distributed within the interval $[0,\theta_{\max}]$ according to a certain probability density $h_{\theta}(\theta)$. 
\end{asmt}

Our customers are \textit{rational}, that is, they always make decisions targeting the better utility value. Here, zero utility corresponds to the situation when a customer decides not to connect, while zero lower bound on $\theta$ implies that there always exists someone who decides not to connect at all (i.e., we assume that our setup is not "covered market" in order to be able to apply the Cournot competition model further on). 

\subsection{Customer quality estimation function} 
We continue by discussing the shape of the function $s(T)$, which reflects the dependence of customer service satisfaction levels on the throughput: announced (expected) or realistic (experienced). Building on our supportive analysis of today's mobile data offers (3G/4G) across the leading network operators in the US and Europe (see Table~\ref{tab:1}), we conclude that the price per unit throughput follows \textit{exactly} the same law across a wide range of larger and smaller providers alike, which implies:
\begin{equation}
\begin{array}{c}
\frac{p}{T} = \frac{a}{T+b} + c.
\end{array}
\end{equation}

The above follows from the appropriate curve fitting and the corresponding regression parameters as well as the values of R-coefficient are shown in Table~\ref{tab:1}. 
\begin{asmt}\label{asm:7}
Since the user utility is assumed to be linear with respect to the price as well as the quality $s(T)$, we impose that $s(T)$ has a similar structure to what service providers use in real markets. Accordingly, our quality $s(T)$ translates into:
\begin{equation}
\begin{array}{c}
s(T)= \frac{aT}{T+b} + cT,
\end{array}
\end{equation}
where $a$, $b$, and $c$ are the coefficients that e.g., may be taken from Table~\ref{tab:1}. 
\end{asmt}

Without loss of generality, we note that the coefficient $a$ could be included into $\theta$, and the coefficient $c$ is then rescaled correspondingly. This effectively means that the maximum price, which a customer is ready to pay for a certain quality $T$ is $p_{\max}^{\theta} = \theta a \cdot  (\frac{T}{T+ b} + \tilde {c}T)$, and $a$ is then incorporated into $\tilde \theta = \theta a$. Therefore, the upper border of taste $\theta_{\max}$ yields a limit on price, where price $\tilde \theta_{max} \cdot  (\frac{T}{T+b} + cT)$ is the lowest among those leading to zero demand. In what follows, we omit tilde over $\theta$ for improved readability. 

\begin{table}[tbp]
\caption{Price per unit throughput: comparing selected providers.}
	\centering
	\begin{tabular}{|c|c|c|c|c|}
\hline
Operator  & a &b &c&R-coefficient\\
\hline
AT\&T (US)\footnotemark[1]  & 16.35 &-0.0273 &6.7120&0.9997\\
Verizon (US)\footnotemark[2] & 41.99 &0.5793 &3.3980&0.9997\\
T-Mobile (Germany)\footnotemark[3] &12.50 &2.175e-08 &3.7500&1.0000\\
Vodafone (UK)\footnotemark[4] & 12.95 &0.1134 &2.3690 &1.0000\\
Orange (France)\footnotemark[5] & 10.80 &1.6000 &2.0000&1.0000\\
Telecom Italia (Italy)\footnotemark[6] & 19.69 &2.5000 &0.6250&1.0000\\
DNA (Finland)\footnotemark[7] &12.36 &0.5619 &0.1216&1.0000\\
\hline
	\end{tabular}
  \label{tab:1}
  \begin{tablenotes}
\raggedright
\item[1] https://www.att.com/shop/wireless/plans/planconfigurator.html\\
\item[2] http://www.verizonwireless.com/landingpages/cell-phone-plans/\\
\item[3] http://www.t-mobile.com/cell-phone-plans/mobile-internet.html\\
\item[4] https://www.vodafone.co.uk/shop/bundles-and-sims/sim-only-deals/\\
\item[5] http://boutique.orange.fr/tablette-et-cle/forfaits-internet-let-s-go\\
\item[6] https://www.tim.it/offerte/mobile/internet-da-tablet-e-pc\\
\end{tablenotes}
\end{table}



\section{Initial stage of our game formulation}

 \subsection{General remarks}
We decompose our subsequent modeling into two consecutive parts, which are the (i) \textit{initial-stage} game formulation where two LSPs divide the market in advance and (ii) \textit{dynamic} game development after the USP enters the market in equilibrium.

\subsubsection{Initial stage} At the initial, \textit{long-term} stage, only two major players (the LSP~1 and the LSP~2) are participating in a differentiated market game with \textit{two phases}: first, deciding on the maximum quality level to attract customers and second, competing in price or, alternatively, in quantity. We consider both price and quantity competitions as they lead to different equilibrium points, and there is still no consensus in current literature as to which type of competition is preferred. At the point of equilibrium, a certain share $D_1 \in (0,1)$ of customers have purchased subscriptions (SIM-cards) of the LSP~1, while $D_2 \in (0,1-D_1)$ have acquired SIM-cards of the LSP~2. The remaining customers may be not connected to either LSP, since our setting is not a "covered market".  We briefly summarize the above as:
\begin{itemize}
\item \textbf{Players}: two LSPs.
\item \textbf{Strategies}: first -- qualities, then -- prices/quantities.
\item \textbf{Payoff}: utility of each LSP (subscriber payments minus costs).
\end{itemize}

\subsubsection{Dynamic stage} At the second, \textit{short-term} stage, a new market player (the USP) is intruding the system where the LSPs operate at the equilibrium point. The USP announces its dynamic price $p_0(t)$ at time moment $t$. Based on their utility function, customers of the LSP $i$ may decide to prefer the USP, while previously non-participating customers (having share $y_0 \in [0,1-D_1-D_2]$) may connect \textit{only} to the USP. Our system tracks the shares $x_1$, $x_2$, $y_0$, $y_1$, and $y_2$, where $x_i$ correspond to customers currently utilizing service of the LSP $i$, and $y_i$ denotes those who change their LSP $i$ for the USP. 

We assume that every customer of the LSP $i$ is aware of both prices $p_0(t)$ and $p_i$ as well as knows its current throughput ($T_0(t)$ or $T_i(t)$). 
Further, customers are able to interact with each other (owing to their close proximity within the area of interest) and hence may \textit{imitate} strategies of others (see e.g., \cite{korcak2012competition}). In our scenario, a customer of the LSP $i$ or USP with the taste $\theta$ decides to inquire another customer from the potential group of the USP or LSP $j$ about the experienced service quality. 
Based on the received information $T_j$, the tagged customer calculates its own potential utility $U(\theta;s(T_j),p_j)$ and, if the potential service is better than the current one, changes its service provider with the probability that is proportional to the difference $U(\theta;s(T_j),p_j)-U(\theta;s(T_i),p_i)$.

Summarizing, the considered dynamic game may be described by:
\begin{itemize}
\item \textbf{Players}: USP and the customers.
\item \textbf{Strategies}: $p_0(t)$ for the USP and service provider selection for the customers.
\item \textbf{Population state}: shares of the customers $(y_0,y_1,y_2)$.
\item \textbf{Payoff}: utility of the USP (pure profit) and utility of the customers.
\end{itemize}
The overall structure of the dynamic game is illustrated in Fig.~\ref{fig:game}.

\begin{figure} [!ht]
  \begin{center}
    \includegraphics[width=0.5\textwidth]{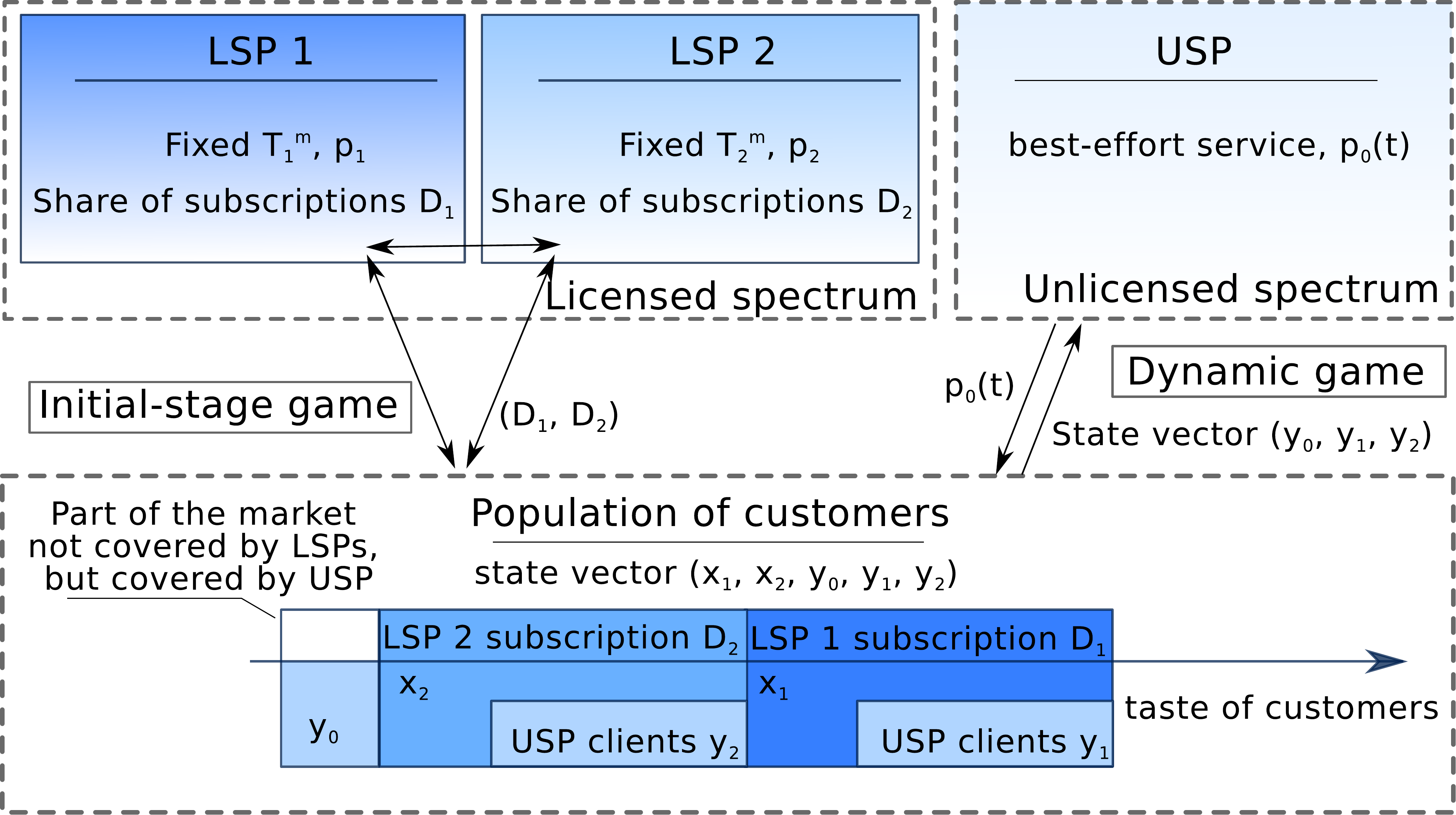}
  \end{center}
  \caption{Interactions in our considered dynamic game.}
  \label{fig:game}
\end{figure} 


\subsection{Profits of the LSPs at the initial stage}




First, two LSPs play a two-stage game to divide the market and set the optimal quantity or price (here, we do not consider the market entry stage~\cite{Gal1983} for the LSPs). For the sake of further comparison, we model \textit{both alternatives}: price setting and quantity setting types of competition, which are known in the literature as Bertrand and Cournot competition models, correspondingly. In this work, we analyze both game types in order to reveal the dependence of the results on the optimal choice of LSP strategies, namely, whether LSPs eventually offer a homogeneous product (as shown by the Cournot model) or two differentiated products (as illustrated by the Bertrand model). Since both situations may happen in the real market, one model cannot be preferred over another.

At the \textbf{first} stage, both LSPs select the offered level of quality. We recall that quality is characterized by a data rate package with \emph{provisionally announced maximum throughput} $T_i^{m}, i=1,2$, and the LSPs are aware of each other, but make their decisions sequentially. At the \textbf{second}, competitive stage, in the Bertrand game the LSPs decide on the prices $p_i, i=1,2$ that will be announced to the customers purchasing their SIM-cards (a mutually exclusive choice is assumed here, that is, either a SIM-card of the LSP~1 or that of the LSP~2). In contrast to that, in the Cournot game the LSPs decide on quantity, which in our scenario translates into the number of subscribed customers or, equivalently, sold SIM-cards.

Making its decisions at both stages, an LSP aims to maximize its \textbf{profit}. Clearly, the LSP income is determined by the financial flow from the previously subscribed and presently served customers, but to define the rest of the profit function we need to clarify the structure of the costs. Clearly, the advertised service quality may not be matching the real quality, which uninformed customers face after purchasing the product. However, in our scenario, maintaining subscription (loyalty) to a certain LSP in a long run is equivalent to a "repeat purchase". Therefore, sellers are highly motivated in keeping their subscribed customers satisfied, as inspired by theory of advertising in~\cite{Milgrom1986}, \cite{Kihlstrom1984}. To reflect this motivation analytically, we introduce the following assumption responsible for handling the costs of the LSPs.

\begin{asmt}
We assume \emph{linear costs} of improvement in the claimed quality $s_i$ \emph{per user}, so that an LSP would be ready to support it for its served customers. Hence, the total costs depend both on the number of served devices and on the announced $T_i^{m}$. The latter reflects the initial investments into a fixed-term spectrum lease and/or the amounts of spectrum that could be resold (as e.g., in~\cite{Bennis2008} or~\cite{niyato2008competitive}) or, otherwise, rented. 
\end{asmt}

In fact, the dependence of costs is not linear on the required throughput, because in practice not all of the customers are active simultaneously and the resources of the LSPs are carefully provisioned. Moreover, the quality function $s(T)$ has a non-linear component $\frac{T}{T+b}$, which, however, has less impact than the linear term $cT$ and can be omitted for the sake of tractability of further analysis. Therefore, our proposed profit function is given by:
\begin{equation}
\begin{array}{c}
\Pi_{i} (s_1,s_2,p_1,p_2)= D_i(s_1,s_2, p_1, p_2) \left(p_i -  \nu s_i \right),
\end{array}
\end{equation}
where $D_i(s_1,s_2, p_1, p_2)$ is a demand for the LSP $i$, and $\nu$ is a "quality cost coefficient". The latter may be roughly estimated from the value of the spectrum license costs to support the announced QoS, normalized by unit time as well as the total number of customers in the country of interest.

\subsection{Demands for the LSPs at the initial stage}
To establish the Nash equilibrium of our game at its initial stage, we exploit the powerful principles of \textit{backward induction}. Assuming that the levels of quality $s_i = s(T_i^{m})$ (or, equivalently, the levels of the announced throughput $T_i^{m}$) are fixed, we first obtain the equilibrium for the Bertrand price game and the Cournot quantity game. Without loss of generality, we may assume that $s_1 \geq s_2$ ($T_1^{m} \geq T_2^{m}$), and hence $p_1 \geq p_2$, since the sellers are targeting the rational customers, and we consider \emph{one-dimensional} product differentiation (otherwise, equation (\ref{eqn:utility}) would contain additional components related to other criteria~\cite{Vandenbosch1995}). 

Following~\cite{motta1993endogenous}, and based on the fixed price and quality levels, we determine the following points of indifference for a tagged customer:
\begin{itemize}
\item point of indifference to buying or not buying the service of the LSP~1 is denoted by the parameter $\theta_{\varnothing,2} = \frac{p_2}{s_2}$ (follows from $U(\theta, s_1, p_1) = 0$),
\item point of indifference to buying the service of the LSP~1 or the LSP~2 corresponds to the parameter $\theta_{1,2} = \frac{p_1 - p_2}{s_1-s_2}$ (follows from $U(\theta, s_1, p_1)  = U(\theta, s_2, p_2)$).
\end{itemize}

Therefore, the corresponding demands may be written as:
\begin{equation}
\begin{array}{c}
D_1(s_1,s_2, p_1, p_2) = \int \limits_{\theta_{1,2}}^{\theta_{\max}} h(\theta) d \theta=H(\theta_{\max})-H(\theta_{1,2} ), \\
D_2(s_1,s_2, p_1, p_2) = \int \limits_{\theta_{\varnothing,2}}^{\theta_{1,2} } h(\theta) d \theta =H(\theta_{1,2} )-H(\theta_{\varnothing,2}),
\end{array}
\label{egn:DEMAND}
\end{equation}
where $h(\theta)$ and $H(\theta)$ are the PDF and the CDF of the taste parameter $\theta$, respectively. In what follows, we consider the Bertrand price competition and the Cournot quantity competition models separately.

\subsection{Bertrand price competition}
In the Bertrand game, the LSPs choose prices $p_1$ and $p_2$ in order to maximize their profits $\Pi_{i} = D_i \left(p_i -  \nu s_i \right)$ for the selected quality function values $s_1$ and $s_2$. 

\begin{asmt}
For the purposes of illustration in this paper, we only consider a uniform distribution of the taste parameter $h(\theta) = \frac{1}{\theta_{\max}}, \theta\in [0,\theta_{\max}]$. In case of more than one player in the market, if $h(\theta)$ is not uniform, we may experience multiple symmetric equilibria~\cite{Gal1983}, which could be investigated separately and may thus generalize the results of this paper in subsequent studies.
\end{asmt}

Differentiating $\Pi_{i} $ over $p_i$, one may calculate the optimal prices (can be verified for $\nu=0$ by~\cite{motta1993endogenous}) for the fixed levels of quality:
\begin{equation}
\begin{array}{c}
p_1 = s_1\frac{\nu(2s_1-s_2) +2\theta_{\max}(s_1-s_2)}{(4 s_1 - s_2)} , \quad
p_2 = s_2 \frac{3\nu s_1 +\theta_{\max}(s_1-s_2)}{(4s_1-s_2)}, 
\end{array}
\end{equation}
where $s_1 =s( T_1^{m})$, $s_2 =s(T_2^{m})$. This derivation is trivial and is based on finding points where the gradient of $\Pi_{i}$ is zero. Then, the profit function is given as:
\begin{equation}
\begin{array}{c}
\Pi_{1} (s_1,s_2)=  4s_1^2\frac{(\theta_{\max}-\nu)^2(s_1-s_2)}{\theta_{\max}(4 s_1 - s_2)^2},\\ \\
\Pi_{2}  (s_1,s_2)= s_1 \frac{(\theta_{\max}-\nu)(-4\nu s_1+\theta_{\max}s_2)(s_1-s_2)}{\theta_{\max}(4s_1-s_2)^2}.
\end{array}
\end{equation}

The first-order condition of maximum for the above optimization functions may be further written as follows:
\begin{equation}
\begin{array}{c}
\frac{\partial \Pi_{1}}{\partial s_1} = 4 s_1(\theta_{\max}-\nu)^2\frac{4s_1^2 - 3s_1s_2 + 2s_2^2}{\theta_{\max}(4s_1 - s_2)^3} =0,\\ 
\frac{\partial \Pi_{2}}{\partial s_2} =s_1^2 (\theta_{\max}-\nu)^2\frac{4s_1 - 7s_2}{\theta_{\max}\left(4s_1 - s_2\right)^3}= 0.
\end{array}
\end{equation}

Denoting $\frac{s_1}{s_2}$ as $x$, we may then locate the maximum points for both LSPs. We note that due to the absence of roots for the first equation and the fact that $\frac{\partial \Pi_{1}}{\partial s_1}>0$, the maximum is located at the border $s_1^* = s_{\max}$, while the optimal quality $s_2^* = s_{\max}\xi$, where $\xi =4/7$ (the second-order condition of maximum $\left. \frac{\partial ^2\Pi_{2}}{\partial s_2^2}\right |_{s^*_1,s^*_2}<0$ could be verified easily). Hence, we have established a candidate solution for the Bertrand game (here, $\theta_{\max}>\nu$ should hold).

\begin{prop}
The obtained solution for the Bertrand game is the Nash equilibrium.
\end{prop}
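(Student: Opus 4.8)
The plan is to prove the claim by certifying that the constructed profile is a subgame-perfect Nash equilibrium, working backward through the two stages exactly as the derivation above proceeds. First I would settle the price (second) stage: for fixed qualities $s_1 \geq s_2$, the prices $p_1, p_2$ were obtained from $\partial \Pi_i/\partial p_i = 0$, so what remains is to show they are mutual best responses and not merely critical points. Under the uniform taste density each payoff $\Pi_i = D_i(p_i - \nu s_i)$ is quadratic and concave in its own price, since the demands in (\ref{egn:DEMAND}) are affine in $(p_1,p_2)$ with $\partial D_i/\partial p_i < 0$; hence each firm's stationary price uniquely maximizes its own subproblem, and because the reported pair solves the two conditions simultaneously it is a Nash equilibrium of the price subgame. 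In parallel I would verify the feasibility constraints that make the closed-form demands valid on the relevant region, namely $0 \leq \theta_{\varnothing,2} \leq \theta_{1,2} \leq \theta_{\max}$ together with $p_1 \geq p_2 \geq 0$, so that the substituted profits $\Pi_i(s_1,s_2)$ are the genuine payoffs rather than artifacts of an out-of-range demand formula.

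Second I would discharge the quality (first) stage using the reduced profits $\Pi_1(s_1,s_2)$ and $\Pi_2(s_1,s_2)$. For the high-quality firm, the already-established sign $\partial \Pi_1/\partial s_1 > 0$ holds for \emph{every} fixed admissible $s_2$, so $s_1 \mapsto \Pi_1(s_1,s_2)$ is increasing on $[s_2, s_{\max}]$ and the best response is the ceiling $s_1^* = s_{\max}$; I would record that this conclusion is robust whether the quality moves are read as simultaneous or as firm~1 leading, since in the latter case substituting firm~2's best response $s_2 = \tfrac{4}{7}s_1$ leaves the induced profit monotone in $s_1$ as well. For the low-quality firm, given $s_1^* = s_{\max}$ the first-order condition yields the interior point $s_2^* = \tfrac{4}{7}s_{\max}$, and I would upgrade it to a genuine best response by combining the stated second-order condition $\partial^2 \Pi_2/\partial s_2^2 < 0$ at $(s_1^*,s_2^*)$ with a direct comparison of $\Pi_2(s_{\max}, s_2)$ against the interval endpoints $s_2 = 0$ and $s_2 = s_{\max}$, thereby showing global optimality on $[0,s_{\max}]$.

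The main obstacle, and the step I expect to require the most care, is ruling out deviations that cross the maintained quality ordering $s_1 \geq s_2$ under which the profit expressions were derived. A bona fide Nash check must confirm that neither firm profits by swapping roles: for firm~2 this is automatic once $s_1^* = s_{\max}$, because the quality ceiling prevents it from overtaking, but for firm~1 I would have to compare its equilibrium payoff as the high-quality provider against the most it could earn by dropping below $s_2^*$ and playing the low-quality role. This entails re-deriving the indifference points and demands in the swapped configuration (equivalently, evaluating the symmetric reduced profit with relabeled indices) and showing the resulting value is dominated by $\Pi_1(s_{\max}, \tfrac{4}{7}s_{\max})$. I would conclude by fixing the parameter condition $\theta_{\max} > \nu$, under which all prices, demands, and profits constructed above are strictly positive, so that the candidate profile lies strictly inside the non-covered-market regime where the formulas are valid, completing the verification that it is the Nash equilibrium.
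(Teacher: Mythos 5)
Your proposal is correct and follows essentially the same strategy as the paper's (very terse) proof, which simply states that one must verify $\Pi_i(s_1^*,s_2^*) \geq \Pi_i(s_1,s_2^*)$ for all $s_1 < s_1^*$ and $\Pi_i(s_1^*,s_2^*) \geq \Pi_i(s_1^*,s_2)$ for all $s_2 \neq s_2^*$, i.e., that no unilateral quality deviation is profitable given the equilibrium of the price subgame. Your additional care about the concavity of the price-stage payoffs, the validity range of the demand formulas, and especially the role-swapping deviation (firm~1 dropping below $s_2^* = \tfrac{4}{7}s_{\max}$) fills in exactly the checks the paper's first inequality subsumes but does not spell out.
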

\begin{proof}
The proof is fairly straightforward and is based on demonstrating that the following holds:
\begin{equation}
\begin{array}{c}
\Pi_i(s^*_1,s^*_2) \geq \Pi_i(s_1,s_2^*) ,\quad\text{ for any } s_1 < s^*_1,\\
\text{ and }\Pi_i(s^*_1,s^*_2) \geq \Pi_i(s_1^*,s_2) ,\quad\text{ for any }  s_2 \neq s_2^*.
 \end{array}
\end{equation}
\end{proof}

Substituting the sought point $(s_{\max}, s_{\max}\xi)$ into the price, demand, and profit function, we may also obtain the indicators, which are collected altogether in Table~\ref{tab:2}. Then, we additionally calculate the \textit{consumer surplus}, which may be derived as:
\begin{equation}
\begin{array}{c}
CS= \int \limits_{\theta_{1,2}}^{\theta_{\max}} (\theta s_1-p_1) \frac{1}{\theta_{\max}} d \theta +
 \int \limits_{\theta_{\varnothing,2}}^{\theta_{1,2} }(\theta s_2-p_2) \frac{1}{\theta_{\max}} d \theta =\\
 =\frac{2s_1^2 s_2(\nu - \theta)^2 (3\nu + \theta)}{3(4s_1 - s_2)^2\theta^2(2\nu + \theta)}.
 \end{array}
\end{equation}

\subsection{Cournot quality competition}
While in the Bertrand market the price $p_i$ is controlled by the LSP $i$ and the share of connected users is then determined through the demand function, in the Cournot market the LSPs control the quantity (i.e., the number of users) and then the prices are derived through the inverted system of demand functions:
 \begin{equation}
\begin{array}{c}
p_1(s_1,s_2,D_1,D_2) = -\theta(D_1s_1 - s_1 + D_2s_2),\\
p_2(s_1,s_2,D_1,D_2) = -\theta s_2(D_1 + D_2 - 1).
 \end{array}
\label{eqn:PRICES_C}
\end{equation}

Substituting the above into the expression for the LSP profit, we may establish the quantity response functions that maximize the profit for the fixed qualities $s_1$, $s_2$ based on the following first-order conditions:
\begin{equation}
\begin{array}{c}
\frac{\partial \Pi_{1}}{\partial D_1} =-\theta(2D_1s_1 - s_1 + D_2s_2)-\nu s_1,\\
\frac{\partial \Pi_{2}}{\partial D_2}=-\theta s_2(D_1 + 2D_2 - 1)-\nu s_2.
 \end{array}
\nonumber
\end{equation}

Therefore, the sought demand functions that maximize the LSP profit are given by:
\begin{equation}
\begin{array}{c}
D_1(s_1,s_2)  = \frac{(\theta_{\max}-\nu)(2s_1 - s_2)}{\theta_{\max}(4s_1 - s_2)},\quad
D_2(s_1,s_2) = \frac{(\theta_{\max}-\nu) s_1}{\theta_{\max}(4s_1 - s_2)}.
 \end{array}
\nonumber
\end{equation}

Substituting these demand functions into (\ref{eqn:PRICES_C}), we obtain the prices set by the LSPs:
\begin{equation}
\begin{array}{c}
p_1 = \frac{\theta s_1(2s_1 - s_2)+2\nu s_1^2}{4s_1 - s_2}, \quad 
p_2 = \frac{\theta s_1s_2+\nu(3s_1 - s_2)}{4s_1 - s_2}
 \end{array}
\nonumber
\end{equation}
and, correspondingly, capture the resulting profit:
\begin{equation}
\begin{array}{c}
\!\!\!\Pi_1\!(s_1\!,s_2) \!=\! \frac{\theta s_1(2s_1 \!- \!s_2)^2(\theta_{\max}\!-\!\nu)^2}{\theta_{\max}(4s_1 \!- \!s_2)^2},
\Pi_2 \!(s_1\!,s_2)\!=\!\frac{\theta s_1^2s_2(\theta_{\max}\!-\!\nu)^2}{\theta_{\max}(4s_1 - s_2)^2}.\!\!
 \end{array}
\label{eqn:PROFIT}
\end{equation}

In the second stage of backward induction, we derive the optimal level of qualities that maximize the profit (\ref{eqn:PROFIT}) by finding the stationary points of the following equations:
\begin{equation}
\begin{array}{c}
\frac{\partial \Pi_{1}(s_1,s_2)}{\partial s_1} = (\theta_{\max}-\nu)^2 \frac{4s_1 (4s_1^2 -3s_1s_2^2 + s_2^3)}{4s_1 - s_2)^3},\\
\frac{\partial \Pi_{2}(s_1,s_2)}{\partial s_2} =(\theta_{\max}-\nu)^2 \frac{s_1^2(4s_1 + s_2)}{\theta_{\max}(4s_1 - s_2)^3}.
 \end{array}
\label{eqn:PROFIT_deriv}
\end{equation}

Denoting $s_1/s_2$ as $x$, we may conclude that there exists no solution $x>1$ for (\ref{eqn:PROFIT_deriv}). Since both $\frac{\partial \Pi_{1}(s_1,s_2)}{\partial s_1}$ and $\frac{\partial \Pi_{2}(s_1,s_2)}{\partial s_2} >0$, the point of maximum is located at the right border of the interval for $s$, that is, $s_1^* = s_{\max}$ and $s_2^* = s_{\max}$. The final prices, profits, as well as points of indifference are easy to verify and are summarized in Table~\ref{tab:2}. Therefore, we have established a candidate solution for the Cournot game and can formulate a proposition similar to the one before as follows.

\begin{prop}
The obtained solution for the Cournot game is the Nash equilibrium.
\end{prop}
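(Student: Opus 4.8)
The plan is to mirror the deviation argument used for the Bertrand proposition. Because the candidate has been obtained by backward induction, the quantity subgame is already resolved: the demands satisfying the first-order conditions are mutual best responses for any fixed qualities, and substituting them yields the reduced profits $\Pi_1(s_1,s_2)$ and $\Pi_2(s_1,s_2)$ in (\ref{eqn:PROFIT}). It therefore remains only to verify the quality-setting stage, namely that neither LSP gains by unilaterally changing its announced quality away from $s_1^*=s_2^*=s_{\max}$ while its rival stays at $s_{\max}$. Concretely, I would establish $\Pi_i(s_{\max},s_{\max}) \geq \Pi_i(s_1,s_{\max})$ for every admissible $s_1$ and $\Pi_i(s_{\max},s_{\max}) \geq \Pi_i(s_{\max},s_2)$ for every admissible $s_2$, exactly as in the previous proof.

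The first thing to notice is that both firms sit at the \emph{upper boundary} $s_{\max}$, so any feasible unilateral deviation can only \emph{lower} a firm's quality. The key structural point is then a role swap: if the incumbent high-quality LSP reduces its quality below $s_{\max}$, it becomes the low-quality provider while its rival (still at $s_{\max}$) becomes the high-quality one. Consequently its post-deviation profit must be read off the low-quality branch, namely $\Pi_2(s_{\max},q)$ with $q<s_{\max}$ the deviator's new quality and $s_{\max}$ the rival's quality. To make this substitution legitimate I would first check that the two branches agree on the diagonal, $\Pi_1(s_{\max},s_{\max})=\Pi_2(s_{\max},s_{\max})$, which follows directly from (\ref{eqn:PROFIT}); this guarantees that the candidate payoff is unambiguous and that profit is continuous as a firm crosses the diagonal.

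Monotonicity then closes the argument. The sign computation in (\ref{eqn:PROFIT_deriv}) shows $\partial \Pi_2/\partial s_2>0$ throughout the admissible quality range, so $\Pi_2(s_{\max},\cdot)$ is strictly increasing in its low-quality argument; hence lowering one's own quality from $s_{\max}$ strictly decreases the deviator's profit, i.e.\ $\Pi_2(s_{\max},q)<\Pi_2(s_{\max},s_{\max})=\Pi_1(s_{\max},s_{\max})$ for all $q<s_{\max}$. By the symmetry of the two LSPs (identical cost coefficient $\nu$ and identical customer market) the same inequality governs a deviation by the other firm. Since an upward deviation is infeasible and every downward deviation is strictly unprofitable, the candidate $(s_{\max},s_{\max})$ admits no profitable unilateral deviation and is therefore a Nash equilibrium of the quality stage; combined with the already-resolved quantity subgame, it is in fact subgame perfect.

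The part requiring care is precisely this role swap at the corner: the reduced profit functions were derived under the labeling convention $s_1\geq s_2$, so a naive evaluation of $\Pi_1(q,s_{\max})$ with $q<s_{\max}$ would invoke the wrong branch. The argument hinges on (i) the \emph{global} (not merely local) positivity of $\partial\Pi_2/\partial s_2$ over the whole admissible interval, which upgrades the stationarity analysis into a genuine global dominance statement, and (ii) continuity of the payoff across the diagonal. Once both are in hand, the corner nature of the solution makes the equilibrium verification immediate.
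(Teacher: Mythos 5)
Your proposal is correct and follows the same route the paper intends: the paper's proof is a one-line reference to the Bertrand argument, i.e.\ verifying that no unilateral quality deviation from $(s_{\max},s_{\max})$ is profitable given the resolved quantity subgame, which is exactly what you do. You additionally supply the details the paper omits --- the role swap onto the low-quality branch $\Pi_2(s_{\max},q)$ for a downward deviator, the continuity check $\Pi_1(s,s)=\Pi_2(s,s)$ on the diagonal, and the global positivity of $\partial\Pi_2/\partial s_2$ from (\ref{eqn:PROFIT_deriv}) --- so your write-up is a strictly more rigorous version of the same argument.
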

\begin{proof}
The proof is easy to derive similarly to that of the above proposition for the Bertrand game.
\end{proof}

Since the Cournot prices and qualities are equivalent, two LSPs divide the subject market in equal proportions, if we assume that there is no weighted preference towards a certain brand. Hence, the consumer surplus in this case may be defined as:
\begin{equation}
\begin{array}{c}
CS= \int \limits_{\theta_{1,2}}^{\theta_{\max}} (\theta s_1-p_1) \frac{1}{\theta_{\max}} d \theta=
\frac{7 s_1^2 s_2 (\nu - \theta_{\max})^2(12\nu + 5\theta)}{12(4s_1 - s_2)^2\theta_{\max}^2 (7\nu + 5\theta)}.
 \end{array}
\end{equation}

Finally, the difference between the Bertrand and the Cournot games with respect to the consumer surplus equals:
\begin{equation}
\begin{array}{c}
CS_{\text{Bertrand}} -CS_{\text{Cournot}}= \frac{5s_1^2s_2 (\nu - \theta)^2 (10 \nu + \theta)}{24\theta(4s_1 - s_2)^2(14 \nu^2 + 9\nu\theta + \theta^2)}>0.
  \end{array}
\end{equation}

As a result of all the above derivations, we obtain the \textit{divided} (albeit not covered) market at time moment $t=0$, which is illustrated in Fig.~\ref{fig:line}. In the figure, the upper part corresponds to our static initial solution. 

\begin{figure} [!ht]
  \begin{center}
    \includegraphics[width=0.5\textwidth]{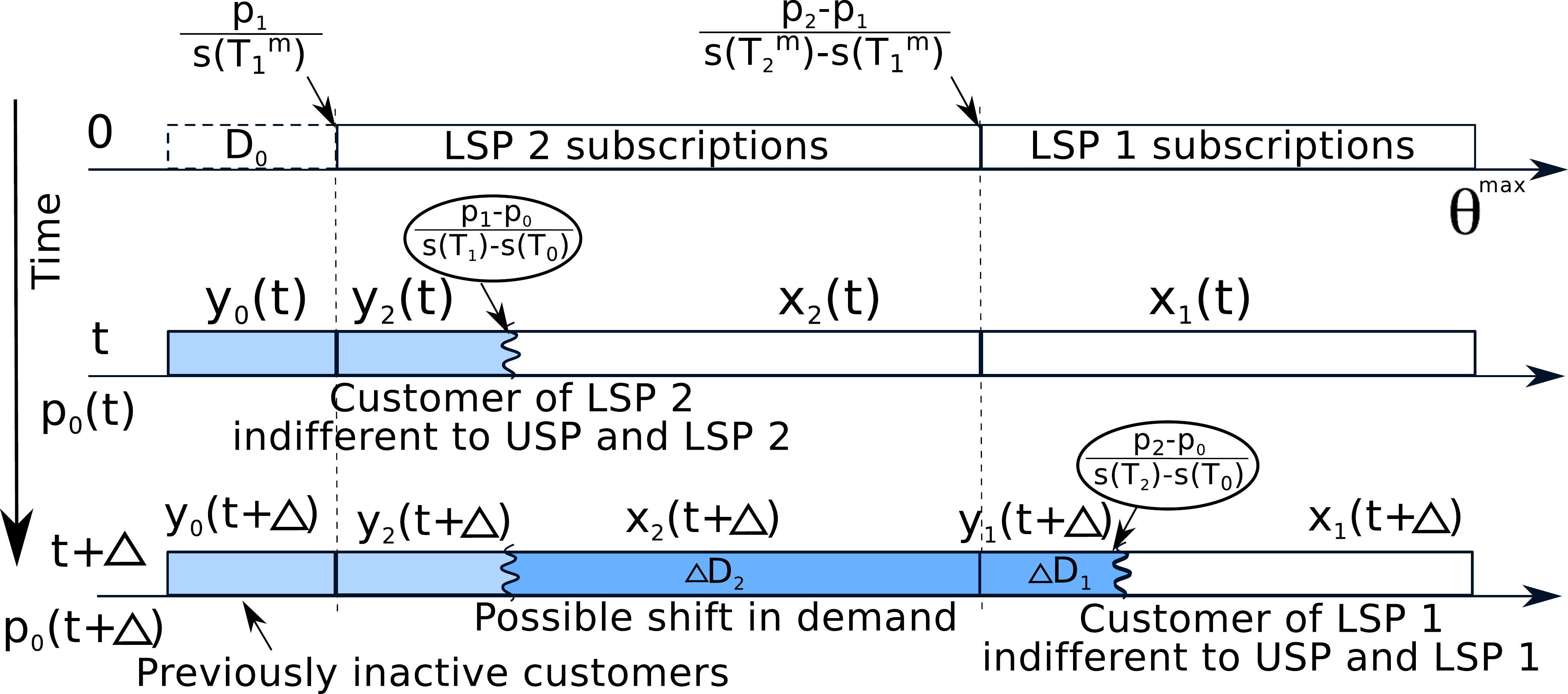}
  \end{center}
  \caption{Illustration of our  considered vertically differentiated market.}
  \label{fig:line}
\end{figure} 



\begin{table}[tbp]
\caption{Comparing our initial game solutions for various scenarios.}
	\centering
	\begin{tabular}{|c|c|c|}
\hline
    	
		Game type& Bertrand & Cournot   \\
		\hline
		Quality of LSP~1\footnote{2}    &$s$											           & $s$   \\
		Quality of LSP~2                      &$0.5714s$											  & $ s$\\
\hline
		Point $\theta_{\varnothing,2}$&$0.8750 \nu +0.125\theta$								  & $0.6667  \nu+0.3333\theta$   \\
		Point $\theta_{1,2}$                &$ 0.5833 \nu +0.4167\theta$							  & $ 0.6667  \nu+0.3333\theta$   \\
		\hline
		Profit of LSP~1                         &$0.1458 s \frac{(  \theta-\nu )^2)}{ \theta}   $				  & $0.1111 s \frac{(  \theta-\nu )^2}{ \theta} $   \\
		Profit of LSP~2                         & $0.0208 s \frac{( \theta-\nu) (  \theta-7.0 \nu )}{ \theta}  $            &   $0.1111 s \frac{(  \theta-\nu )^2}{ \theta} $       \\
		Aggregate MP\footnote{3}      &$0.1667 s \frac{(7 \nu^2 + 4  \theta^2 - 11 \nu  \theta)}{4 \theta}$  & $0.2222 s \frac{(  \theta-\nu )^2}{ \theta}  $       \\  
		\hline
		Price of LSP~1                         &$0.2500 s (3 \nu +  \theta)  $                   					  & $0.3333 s  (2\nu + \theta)$        \\
		
		Price of LSP~2                         &$0.0714 s (7 \nu +  \theta)  $  							  & $0.3333 s  (2\nu + \theta)$     \\
		\hline
		Costs of LSP~1		             &$0.5833\nu s \frac{( \theta-\nu )}{\theta} $                   					  & $0.3333\nu s\frac{( \theta-\nu )}{\theta} $        \\
		Costs of LSP~2		             &$0.1667\nu s\frac{( \theta-\nu )}{\theta} $                   					  & $0.3333\nu s\frac{( \theta-\nu )}{\theta} $        \\
		\hline
		Demand of LSP~1		             &$0.5833\frac{( \theta-\nu )}{\theta} $                   					  & $ 0.3333\frac{( \theta-\nu )}{\theta} $        \\
		Demand of LSP~2		             &$0.2917\frac{( \theta-\nu )}{\theta}  $                   					  & $0.3333\frac{( \theta-\nu )}{\theta} $        \\
		\hline
		Total demand                         &$0.8750 \frac{( \theta-\nu )}{\theta} $						  &$0.6667\frac{( \theta-\nu )}{\theta}$        \\
		\hline
		CS                        &$\frac{2s_1^2 s_2(\nu - \theta)^2 (3\nu + \theta)}{3(4s_1 - s_2)^2\theta^2(2\nu + \theta)} $						  &$\frac{7 s_1^2 s_2 (\nu - \theta)^2(12\nu + 5\theta)}{12(4s_1 - s_2)^2\theta^2 (7\nu + 5\theta)}$        \\
		
		\hline
	\end{tabular}
  \label{tab:2}
  \begin{tablenotes}
\raggedright
\item[1]Cooperative strategy in Cournot game (both LSPs maximize total profit).\\
\item[2]For brevity, $\theta = \theta_{\max}$, $s = s_{\max}$.\\
\item[3]Market profit.
\end{tablenotes}
\end{table}

\section{Dynamic development of game formulation}
In this subsection, we introduce a new player on the market, namely, the USP. In our scenario, the USP enters the market where the two LSPs are operating in equilibrium, for a \textit{short-term} period, thus distorting the balance. Assessing the changes in the customer alignment picture constitutes the theoretical target of this section. We note that in the considered market the customers are differentiated by their experienced spectral efficiency (due to the actual geometry of their locations) as well as their preferences with respect to the price/throughput trade-off (according to our assumption of vertically differentiated market). To tackle the resulting complex and multidimensional problem, we capture the system dynamics by averaging the spectral efficiencies across the coverage areas of the AAPs, which may be equivalent to characterizing customer mobility.

\subsection{Dynamic setup and mmWave properties}
In order to follow the dynamic evolution of customer utility, we first focus on deriving the average spectral efficiency, which is a function of the number of simultaneously served customers per mmWave AAP. For that matter, we consider a \textit{tagged} AAP of the fleet $i$ representing the average system behavior within this fleet. Owing to the flexibility of drone cell deployment as well as minding our earlier assumption of uniform AAP placement,  we further assume that the coverage area of this AAP may be approximated by a circle of radius $R_{AAP}$ that is equal to a half of the distance between two neighboring AAPs (as per Assumption \ref{asm4}):
\begin{equation}
\begin{array}{l}
R_{AAP} = \frac{R}{\sqrt{2}+  \left \lceil   \frac{R} {2 h_i\tan \beta_{\max} }+1-\sqrt{2}  \right \rceil   -1}.
\end{array} 
\end{equation}

We note that the above has been derived by analogy to (\ref{eqn:drone_number}) reflecting a particular considered deployment, and is meant here to solely serve the purposes of radius estimation illustration.
\begin{asmt}
Specifically, we assume that user devices are distributed according to a Poisson Point Process (PPP) with the density $\mu_0$, which has been introduced earlier in Assumption 1.
\end{asmt}
Then, the number of devices per AAP may be calculated as:
\begin{equation}
\begin{array}{c}
\!\!\!\! n_i \!=\! \mu_0 \pi R_{\text{AAP}}^2  x_i,i\!=\!1,2, \quad n_{0} \!=\! \mu_0 \pi R_{\text{AAP}}^2 (y_0\!+\!y_1\!+\!y_2),\!\!
\end{array} 
\end{equation}
where $x_i,i=1,2$ correspond to the shares of customers for the LSPs and $y_i,i=0,1,2$ is the total share of customers for the USP.

\begin{asmt}
For the sake of analytical tractability, we replace random spectral efficiencies of the customers with a spectral efficiency value averaged across the coverage area of the tagged AAP.
\end{asmt}

Due to our PPP assumption, the distribution of distances $d$ to the AAP within a circle of radius $R_{\text{AAP}}$ equals $f_d(x)=\frac{2x}{R_{\text{AAP}}^2}$. Therefore, the average spectral efficiency $\overline{\eta}(R_{\text{AAP}})$ may be established by calculating the following integral:
\begin{equation}
\begin{array}{c}
\overline {\eta} (R_{\text{AAP}}) = \int \limits_{0}^{R_{\text{AAP}}}  \log_2\left(1+ \tilde p(q_{\text{LOS}}+q_{\text{NLOS}}G_{NLOS})\right) \frac{2x}{R_{\text{AAP}}^2} dx,
\end{array} 
\end{equation}
where $\tilde p=p_{tx}\frac{G_i G_a}{(h^2+d^2)N_0}$ for brevity, $G_i$ is the path gain defined in Assumption~\ref{asm:powers}, $G_a$ is the antenna gain, and the probabilities of LOS/NLOS link have different nature by contrast to the conventional approach in~\cite{ITU1410}. These are derived in Appendix and summarized here as: 
\begin{equation}
\left\{ \begin{array}{l}
q_{\text{LOS}}(d) = e^{-\mu d \cdot 2r_b \left( \frac{h_b-h_d}{h-h_d}\right)},\text { if LOS exists},\\
q_{\text{NLOS}|\text{no LOS}}(d)=\left(1-\exp({- \mu {\frac{\pi \left (h \tan{(\phi)} \right)^2 \cdot \cos (\phi)}{\sqrt{\cos^2(\phi)-\frac{d^2}{d^2+{h^2}}} }}} )\right)  \times \\
\times  \left (1-q_{\text{LOS}}(d) \right), \text { if no LOS, NLOS exists}.
\end{array} \right.
\end{equation}

The \textit{closed-form} approximation is based on the fact that $E[f(x)] \approx f(E[x])$ and is given by:
\begin{equation}
\begin{array}{c}
\overline {\eta} (R_{\text{AAP}}) =   \log_2\left(1+ p_{tx}\frac{G_i G_a(q_{\text{LOS}}+q_{\text{NLOS}}G_{NLOS})}{\left(h^2+\left(\frac{2}{3R_{\text{AAP}}}\right)^2 \right)N_0} \right) ,
\end{array} 
\end{equation}
where the above numerical solution remains suitable for the necessary further calculations. To assist in the process, Fig.~\ref{fig:SE_plot} illustrates the changes in the slope of spectral efficiency as averaged over a circle of varied radius for $28$ and $60$~GHz. It clearly indicates the impact of blockage at lower AAP altitudes as well as shows the maximum distance to the receiver.

\begin{figure} [!ht]
  \begin{center}
    \includegraphics[width=0.50\textwidth]{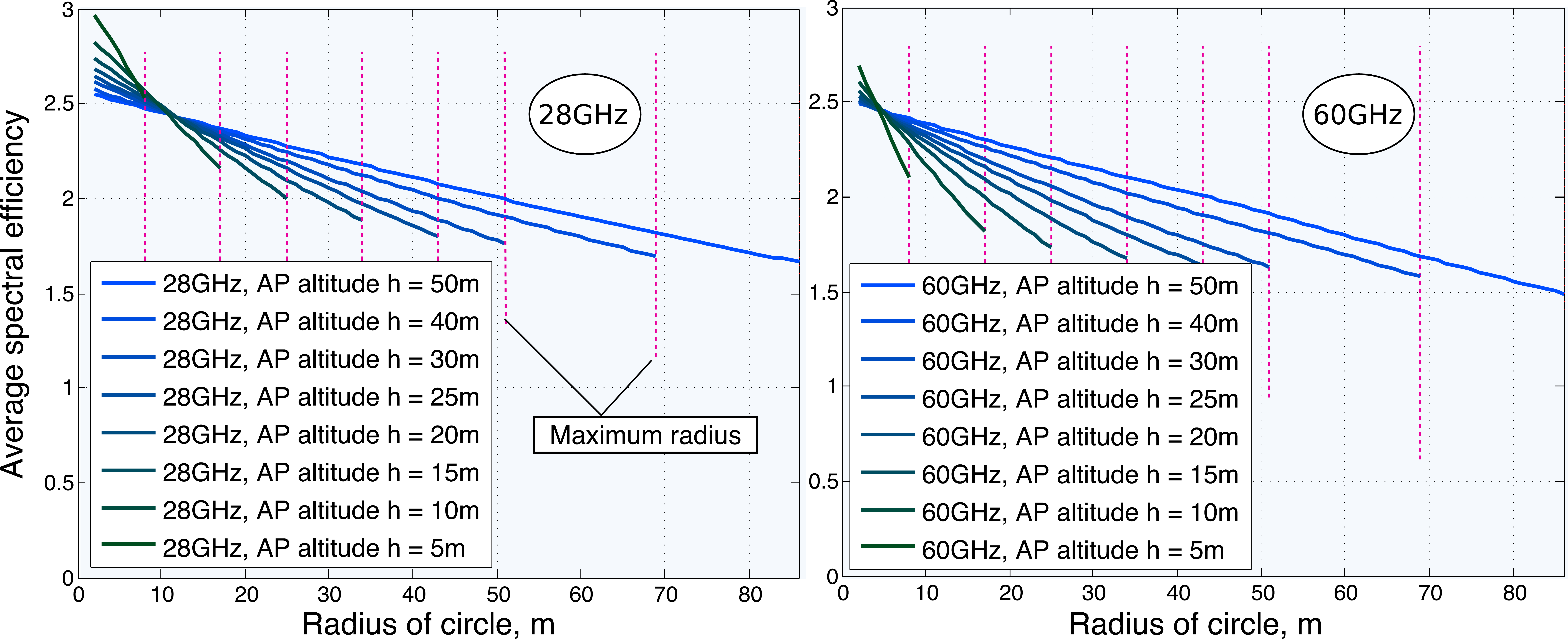}
  \end{center}
  \caption{Spectral efficiency averaged over a circle of varied radius.} 
  \label{fig:SE_plot}
\end{figure} 

\subsection{USP market entry} 
In our formulation, the population state may be described by the vector $(y_1,y_2,y_0)$, which denotes the shares of customers connected to the USP, having purchased a SIM-card of the LSP~1 and the LSP~2, correspondingly, or having no SIM-card at all. Further, the shares of the customers actively using the service by the LSPs may be derived directly from the initial-stage demand $x_1(t)= D^*_1 - y_1(t)$, $x_2= D^*_2 - y_1(t)$ (we remind that the customers are not allowed to change their SIM-cards during our dynamic short-term game). 

Given the information on $(y_0(t),y_1(t),y_2(t))$, the USP attempts to make its optimal decision on the dynamic price $p_0(t)$ that maximizes its total profit (in the absence of full market information), which in turn directly depends on both the number of connected customers and the price itself. For the fixed price $p_0$ and the estimated average actual throughput $\frac{B_{SP} \eta_{SP}}{y_1+y_2+y_0}$ per customer, we may establish the USP profit as: 
\begin{equation}
\Pi_{0} = p_0(t) D_{0}(t) = p_0(t) \left( y_0(t)+ y_1(t) +  y_2(t)\right), 
\label{eqn:USP_profit}
\end{equation}
where $y_0$ denotes the proportion of the customers coming from a previously inactive share of the market (those who did not purchase any SIM-card before).


Further, let the USP be unaware of future customer behavior and our hidden customer reaction functions, so that the USP could not maximize (\ref{eqn:USP_profit}) explicitly. Then, its \textit{heuristic choice} would be to react flexibly to the changes in the customer demand. In particular, if the number of subscribers is decreasing, then the USP lowers the price and vice versa. Therefore, the stationary point for the price $p(t)$ is defined by the constant demand of the USP:
\begin{equation}
\begin{array}{c}
\frac{d(y_0(t)+y_1(t)+y_2(t))}{dt}=0.
\end{array}
\end{equation}

Further, we assume that the considered price dynamics set by the USP satisfies the following equation:
\begin{equation}
\begin{array}{c}
\frac{dp}{pdt}= c_{\text{price}}\frac{d(y_0(t)+y_1(t)+y_2(t))}{dt},
\end{array}
\end{equation}
that is, the relative increment of the USP price is proportional to the increment in the number of currently served customers. We note that the latter is adopted as one of the reasonable examples where the USP acts rationally, and other USP strategies may in principle be considered.

\subsection{Decisions of the customers} 
In order to mimic customer decisions when there is a choice between two utility values, we introduce the following assumption:
\begin{asmt}
We assume that if there is a utility difference of $\Delta U(\theta)$ between its possible value and the current one, the customer decides to change or imitate the strategy with the probability:
\begin{equation}
\begin{array}{l}
p_u\left (\Delta U(\theta) \right)  = \frac{2}{1+e^{-c_u\Delta U(\theta)}}-1, \text{ for } \Delta U(\theta) > 0.
\end{array}
\end{equation}
The above consideration tightly approximates the behavior of customers with arbitrary levels of patience, as given by the "impatience" coefficient $c_u$. In particular, $p_u\left (0 \right) = 0$ (no changes whatsoever) and $p_u\left ( \Delta U \rightarrow \infty \right) = 1$. This logistic function is a convenient formulation to reflect the decision making probability in the absence of the information on the actual range of utility values within our game.
\end{asmt}

Modeling the customer behavior in more detail, we employ a notion of proximity-based social interaction. Accordingly, the customers do not know the actual quality of services they had never used before, but they may "talk" to other customers with a certain degree of $\gamma$ (could be regarded as a proximity-based interaction possibility). It effectively means that at some point of time the tagged customer may select another customer from the alternative service group to inquire for its service quality. When the USP announces the price $p_0(t)$ at the state $(y_0,y_1,y_2)$, there could be three types of \textit{mutually exclusive} reactions of the LSP customers as follows.

\begin{enumerate}
\item \textbf{Gossiping.} At the moment $t+\Delta t$, a customer of the LSP $i$ decides to assess its quality 
and thus inquires someone connected to the USP, if any, with the probability $\gamma$. Knowing the alternative quality $s(T_0)$ at time $t$, said customer may decide to change its current service provider, if $U(\theta;s(T_0), p_0) > U_i(\theta;s(T_i), p_i)$ with the probability $p_u(\Delta U(\theta))$. The corresponding probability is defined as:
\begin{equation}
\begin{array}{l}
\xi \gamma (y_0+y_1+y_2) p_u  \!  \left (U(\theta;s_0, p_0) - U(\theta;s_i, p_i)\right),
\end{array}
\end{equation}
where $s_i = s(T_i)$ for brevity.
\item \textbf{Curiosity.} Another alternative is to connect to the USP out of "human curiosity", which is here independent of either social opinion or own quality observation, but dictated by human nature to explore new opportunities. We note that our earlier assumption on rationality of the customers still holds here, since curiosity aims solely at improving the utility. For the "curiosity" factor $\alpha$, the corresponding probability is given by:
\begin{equation}
\begin{array}{c}
\xi  (1-\gamma)  \alpha_c.
\end{array}
\label{eqn:curiosity}
\end{equation}
\item \textbf{Dissatisfaction.} The last considered type of user reactions is when the customer is highly dissatisfied with the current service -- which e.g., happened to be far worse than what was expected -- and decides to connect to the USP with the probability comprising the "dissatisfaction" factor $\delta$ that corresponds to the actual level of disappointment, which leads us to:
\begin{equation}
\begin{array}{c}
\!\! \!\!\!\!\!\!\! \xi  (1\!-\!\gamma)(1\!-\!\alpha_c) \delta p_u \! \left( U(\theta;s(T_i^{m}), p_i)\!-\!U(\theta;s_i, p_i)  \right)\!.
\end{array}
\end{equation}
\end{enumerate}

We note that the probability $\xi$ to consider changing the current service provider is but a technical parameter here, which regulates the rate of the process in question as well as determines how frequently the USP price is reconsidered. The parameters $\alpha$, $\gamma$, and $\delta$ are assumed fixed for all of the customers. Furthermore, for backward transitions from the USP to the LSP service, the above three options hold as well:

\begin{enumerate}
\item \textbf{Gossiping.} For a customer already connected to the USP, the "gossiping rate" towards its initial subscription is estimated as:
\begin{equation}
\begin{array}{l}
\xi \gamma \cdot x_i p_u  \!  \left (U(\theta;s_i, p_i) - U(\theta;s_0, p_0)\right).
\end{array}
\end{equation}
\item \textbf{Curiosity.} The "curiosity rate" remains the same as in (\ref{eqn:curiosity}).
\item \textbf{Dissatisfaction.} The corresponding probability based on the level of dissatisfaction is given by:
\begin{equation}
\begin{array}{c}
\!\!\!\! \!\!\!\!\! \!\! \xi (1\!-\!\gamma)(1\!-\!\alpha_c) \delta p_u \! \left( U(\theta;s(T_i^{m}), p_i)\!-\!U(\theta;s_0, p_0)  \right)\!,
\end{array}
\end{equation}
where $T_i^{m}$ is the announced throughput and $s(T_i^{m})$ is the solution to the initial-stage game of the LSPs.
\end{enumerate}

We note that previously inactive customers do not have expectations and thus may select the USP at the moment $t$. Hence, they do so if, after they inquire for the relevant information, the new utility value is above zero, that is, with the probability $ \xi \gamma  p_u \! \left( U(\theta;s_0, p_0)  \right)$, or out of curiosity $\xi  (1-\gamma)  \alpha$, since they do not expect any QoS guarantees. Also, these users disconnect from service with the probability $\xi p_u\left( -U(\theta;s_0, p_0)  \right)$, if at a certain time moment their utility drops below zero.

\subsection{Capturing system dynamics} 
In order to capture the dynamic evolution of both the \textbf{customer} and the \textbf{USP} strategies, we further model the behavior of the customers on the market independently of their personal preferences $\theta$, by averaging across the corresponding market shares. Even though within one group the customers still tend to act non-uniformly, this approximation appears to be surprisingly accurate (as we confirm with simulations later on) to model the overall market evolution. First, we derive all the needed expressions for describing said dynamics and then separately calculate the corresponding coefficients (i.e., rates $Q^{\text{X}}_{i \rightarrow j}$, see below).

We continue by constructing a system of differential equations, according to which our market is evolving with time:
\begin{equation}
\left\{ \begin{array}{l}
\frac{dy_1}{dt} =      (D_1-y_1) \xi  \gamma (y_0+y_1+y_2) \cdot  Q_{1\rightarrow 0}^{\text{G}}            \quad \text{   // gossiping}  \\
+ (D_1-y_1)\xi \alpha_c  (1-\gamma)                       									\quad \text{   // curiosity} \\
+   (D_1-y_1)\xi \delta (1-\gamma) \cdot Q^{\text{D}}_{1\rightarrow0}  							 \quad  \text{   // dissatisfaction}\\
 -   y_1  \xi        \gamma (D_1-y_1)\cdot  Q_{0\rightarrow1}^{\text{G}} 						 \quad \text{   // gossiping} \\
 - y_1\xi            \alpha_c  (1-\gamma)  											\quad \text{   // curiosity}\\
 - y_1 \xi            \delta (1-\gamma)\cdot Q^{\text{D}}_{0\rightarrow1},  							\quad \text{   // dissatisfaction}  \\
 \\
\frac{dy_2}{dt} =   (D_2-y_2)\xi   \gamma (y_0+y_1+y_2) Q_{20}^{\text{G}} 				\quad \text{   // gossiping} \\
 + (D_2-y_2) \xi              \alpha_c  (1-\gamma) 										\quad \text{   // curiosity} \\
 +  (D_2-y_2) \xi   \delta (1-\gamma) 	\cdot Q^{\text{D}}_{2\rightarrow0} 						 \quad  \text{   // dissatisfaction} \\
  - y_2  \xi            \gamma (D_2-y_2) \cdot Q_{02}^{\text{G}} 								 \quad \text{   // gossiping}\\
 -y_2 \xi          \alpha_c     (1-\gamma)											 \quad  \text{   // curiosity}\\
 - y_2 \xi             \delta (1-\gamma) \cdot Q^{\text{D}}_{0\rightarrow2} , 						\quad \text{   // dissatisfaction}  \\
\\
\frac{dy_0}{dt} =   (D_0-y_0)\xi\gamma (y_0+y_1+y_2) \cdot  Q_{\varnothing \rightarrow 0} 			 \quad \text{   // gossiping} \\
+(D_0-y_0) \xi \alpha_c  (1-\gamma) 											 \quad \text{   // curiosity} \\
-  y_0   \xi \cdot Q_{0\rightarrow \varnothing } ,\text{   // low utility} 
\\
\\
\frac{dp_0}{pdt}= c_{\text{price}}\frac{d(y_0+y_1+y_2)}{dt},\\
\end{array} \right.
\label{eqn:main_dynamics}
\end{equation}
where the price update coefficient $c_{\text{price}}$ of the USP corresponds to a unit of time, $Q_{i\rightarrow 0}^{\text{G}}$ (or $Q_{0\rightarrow i}^{\text{G}}$) are the coefficients reflecting the group-average "willingness" to change the service provider from the LSP $i$ to the USP (or backwards) after inquiring proximate users, while $Q^{\text{D}}_{i\rightarrow0}$ ($ Q^{\text{D}}_{0\rightarrow i}$) is the average "dissatisfaction" rate due to a difference between the quality announced by the LSP and the actual experienced quality. Further, $Q_{\varnothing \rightarrow 0} $ ($Q_{0\rightarrow \varnothing } $) corresponds to a decision on connecting (disconnecting) for the customers without the SIM-cards. Also here, $0\leq y_1 \leq D_1$, $0\leq y_2 \leq D_2$, $0\leq y_0 \leq D_0$, and $D_i$ are the initial shares for the LSPs obtained with either Bertrand or Cournot solutions, while $D_0 = 1-D_1 - D_2$ are the initially inactive customers. 


\subsubsection{Changing service provider after proximate interaction}
The current throughput of the customer from the group $i$ is based on equal sharing of bandwidth, that is, $T_i = \frac{B_{i} \overline{\eta}_{i}}{N x_i},i=1,2 $ or $T_0 = \frac{B_{0} \overline{\eta}_{0}}{N (y_0+y_1+y_2)} $, 
where $\overline{\eta}_{i}, i=0,1,2$ is the average spectral efficiency over the coverage area of one mmWave AAP. We may then calculate $s_i(T_i) = \frac{1}{\frac{b}{T_i}+1}+cT_i$ for each $i=0,1,2$ and, denoting these functions as $s_i$, 
continue by deriving the coefficients $Q_{i\rightarrow 0}^{\text{G}}$ and $Q_{0\rightarrow i}^{\text{G}}$ that capture system dynamics under user "gossiping". We also note that:
\begin{equation}
 \begin{array}{c}
U(\theta;s_0, p_0) - U(\theta;s_i, p_i)) = \theta (s_0-s_i)-(p_0-p_i).
\end{array}
\end{equation}

Hence, denoting $\frac{p_0-p_i}{s_0-s_i}$ as $\theta_{0,i}$, which defines the ranges of $\theta$, we obtain:
\begin{equation}
 \begin{array}{c}
\! \! \! \! \! \! U(\theta;s_0, p_0) - U(\theta;s_i, p_i)) >0, \quad\quad\quad\quad\\
\quad \quad \quad \quad  \text{ if }  \theta > \theta_{0,i} ,s_0>s_i,\text{ or } \theta < \theta_{0,i},s_0<s_i,
\end{array}
\end{equation}
and a similar expression can be established for $U(\theta;s_0, p_0) - U(\theta;s_i, p_i)) <0$. Further, in order to average over the range of $\theta$, we integrate $p_u\left (\Delta U(\theta) \right)$ separately for $\Delta U(\theta)>0$ and $\Delta U(\theta) <0$ as:
\begin{equation}
\begin{array}{l}
\!\!\Delta U\!(\theta)\!>\!0:  \frac{1}{\theta_{\max}} \!\! \int\! \left( \frac{2}{1+e^{-c_u (\theta (s_0-s_i)-(p_0-p_i))}}-1 \right) \!d \theta \!= \!\\
\frac{\theta}{\theta_{\max}} \!+ \!\frac{1}{\theta_{\max}}  \frac{2}{s_0-s_1}\log(e^{-c_1 \theta+c_2} + 1),\\
\!\!\Delta U\!(\theta)\!<\!0:  \frac{1}{\theta_{\max}} \!\! \int \!\left( \frac{2}{1+e^{-c_u (\theta (s_0-s_i)-(p_0-p_i))}}-1 \right) \!d \theta \!=\! \\
\frac{\theta}{\theta_{\max}} \!-\! \frac{1}{\theta_{\max}}  \frac{2}{s_0-s_1}\log(e^{c_1 \theta-c_2} + 1).
\label{eqn:integrals}
\end{array}
\end{equation}

Based on the above range, we may now average over the possible values of $\theta$, and write the following:
\begin{equation}
\begin{array}{l}
Q_{i\rightarrow 0}^{\text{G}} = \frac{1}{\theta_{\max}}  \int \limits_{A_1}^{A_2} \left( \frac{2}{1+e^{-c_u (\theta (s_0-s_i)-(p_0-p_i))}}-1 \right) d \theta = \\
\left[\frac{A_2-A_1}{\theta_{\max}} + \frac{1}{\theta_{\max}}  \frac{2}{s_0-s_1}\log\frac{e^{-c_1 A_2+c_2} + 1}{e^{-c_1 A_1+c_2} + 1}\right],
\end{array}
\end{equation}
where $A_1\geq A_2$ constitute the integration range depending on $i$ as well as on the type of the initial game as per Table~\ref{tab:paramsA}. If $A \geq A_1$, we say that $Q_{i\rightarrow 0}^{\text{G}} = 0$.

\begin{table}[tbh]
\centering 
\caption{Considered system evaluation parameters}
\begin{tabular}{|c|c|c|c|c|}
\hline \label{tab:paramsA}
  & \multicolumn{2}{c|}{$s_0>s_i$}  & \multicolumn{2}{c|}{$s_0<s_i$\footnotemark[1]}\tabularnewline
  \hline   
  & Bertrand &    Cournot   & Bertrand   &  Cournot   \tabularnewline
\hline  \hline
$A_1$, LSP 1  &  \multicolumn{2}{c|}{$\theta_{\max} $ }                                         & \multicolumn{2}{c|}{$ \min(\theta_{\max},\theta_{0,i})$}    \tabularnewline \hline 
$A_2$, LSP 1  &  \multicolumn{2}{c|}{$\max (\theta_{1,2} ,\theta_{0,i} )$   }            & \multicolumn{2}{c|}{ $ \theta_{1,2} $  }                              \tabularnewline
\hline 
\hline 
$A_1$, LSP 2  & $\theta_{1,2} $                                                         &$ \theta_{\max}$         &$\min (\theta_{1,2},\theta_{0,i} )$                    &$\min (\theta_{\max},\theta_{0,i} )$\tabularnewline \hline 
$A_2$, LSP 2  & \multicolumn{2}{c|}{$\max (\theta_{2,\varnothing} ,\theta_{0,i} )$ }           & \multicolumn{2}{c|}{$\theta_{2,\varnothing}$}\tabularnewline
\hline 
\end{tabular}
\begin{tablenotes}
\raggedright
\item[1] If $s_0=s_i$, a difference between the utilities is only defined by $p_0-p_i$.
\end{tablenotes}
\end{table}

Similarly, $Q_{0 \rightarrow i}^{\text{G}} $ may be obtained with the only difference that the columns $s_0>s_i$ and $s_0<s_i$ swap, and the second line in (\ref{eqn:integrals}) is utilized for $A_1 \geq A_2$ (otherwise, $Q_{0 \rightarrow i}^{\text{G}} =0$):
\begin{equation}
\begin{array}{l}
Q_{0 \rightarrow i}^{\text{G}} = \frac{1}{\theta_{\max}}  \int \limits_{A_1}^{A_2} \left( \frac{2}{1+e^{-c_u (\theta (s_0-s_i)-(p_0-p_i))}}-1 \right) d \theta = \\
\left[\frac{A_2-A_1}{\theta_{\max}} + \frac{1}{\theta_{\max}}  \frac{2}{s_0-s_1}\log\frac{e^{-c_1 A_2+c_2} + 1}{e^{-c_1 A_1+c_2} + 1}\right].
\end{array}
\end{equation}

The above expressions cover all four transitions (within the groups $1,2$) for both considered game types. Finally, an inactive customer decides to connect if its utility is above zero (i.e., $\theta > \frac{p_0}{s_0} = \theta_{0, \varnothing}$):
\begin{equation}
\begin{array}{l}
\!\!\!Q_{ \varnothing \rightarrow 0}\! = \!
\left[\frac{\theta_{2, \varnothing}-\theta_{0, \varnothing}}{\theta_{\max}} + \frac{1}{\theta_{\max}}  \frac{2}{s_0-s_1}\log\frac{e^{-c_1 \theta_{2, \varnothing}+c_2} + 1}{e^{-c_1 \theta_{0, \varnothing}+c_2} + 1}\right],
\end{array}
\end{equation}
if $\theta_{0, \varnothing}<\theta_{2, \varnothing}$ or zero, otherwise. Let us also calculate $Q_{0\rightarrow \varnothing } $, since this variable and the one above are the only two parameters of interest for the "inactive" market share:
\begin{equation}
\begin{array}{l}
\!\!\! Q_{  0  \rightarrow \varnothing} \!= \!
\left[\frac{\theta_{0, \varnothing}}{\theta_{\max}} - \frac{1}{\theta_{\max}}  \frac{2}{s_0-s_1}\log\frac{e^{c_1 \theta_{0, \varnothing}-c_2} + 1}{e^{-c_2} + 1}\right].
\end{array}
\end{equation}

\subsubsection{Changing provider due to customer dissatisfaction}
Dissatisfaction of a customer with the initial subscription to the LSP $i$ depends directly on $U(\theta;s_i(T_i^{m}), p_i) - U(\theta;s_j, p_j))$, where $j=0,i$ is the index of the current service provider:
\begin{equation}
 \begin{array}{c}
\! \! \! \! \! \! U(\theta;s_i^{m}, p_i) - U(\theta;s_j, p_j)) >0,\\
   \text{ if }  \theta > \theta_{i,j} ,s_i^{m}>s_j,\text{ or } \theta < \theta_{i,j},s_i^{m}<s_j, j=0,i,
\end{array}
\end{equation}
where $s_i^{m}=s_i(T_i^{m})$ and $\theta_{i,j}  = \frac{p_i-p_j}{s_i^{m}-s_j}$. Here, one may employ all the same derivations as above (even a simpler procedure would suffice due to the fact that $\theta_{i,i} =0$), based on the coefficients from Table~\ref{tab:paramsA}. The four coefficients in question may be obtained according to:
\begin{equation}
\begin{array}{l}
Q_{i\rightarrow j}^{\text{D}} =
\left[\frac{A_2-A_1}{\theta_{\max}} + \frac{1}{\theta_{\max}}  \frac{2}{s_i^m-s_j}\log\frac{e^{-c_1 A_2+c_2} + 1}{e^{-c_1 A_1+c_2} + 1}\right],
\end{array}
\end{equation}
where we do not need to consider the case of negative utility as well as the integral corresponding to the one in the proximate interaction part (see above). We note that at this point all ten coefficients of interest have been obtained. The corresponding system of differential equations is rather cumbersome to solve analytically, but a numerical solution suffices for the purposes of our analysis.

\subsection{Cooperation between the LSPs ("proxy" functionality)}
We finalize this section by modeling possible cooperation between the LSPs. In particular, we assume that based on a certain long-term agreement, mmWave AAPs (drone cells) of the LSP~1 may assist those of the LSP~2 and the other way around, which has been termed previously the \textit{"proxy" AAP} functionality. As a result, the customers of the LSP $i$ may be served by the closest AAP of either LSP. Since "proxy" handover has to be made transparent for the customer device, we assume the use of the \textit{same frequency band} for such operation, but the system \textit{airtime} is shared as per the actual total number of served customers (see Fig.~\ref{fig:beamforming} for details). 

We remind that the shares of the customers are denoted as $x_i,y_i$, as previously. However, these variables correspond to the numbers of devices that may be served by the AAPs of their own LSP. The shares of those that are served by the assisting LSP are denoted as $z^{x}_i,z_i$. The average share $\epsilon$ of "foreign devices" is calculated according to the geometric probability to place a point into the area when assistance is required. Furthermore, the two average spectral efficiencies $\overline \eta_i, i=1,2$ are recalculated according to the new integrated AAP deployment. Given these two spectral efficiencies, we may then calculate the throughput of a customer, if the LSP $i$ is assisted by the LSP $j$, which clearly differs from what we had before. Namely, $T^{z}_i = \frac{B_{j} \overline{\eta}_{i}}{x_i},(i,j)=(1,2),(2,1)$. The operation of the USP does not change when the LSPs cooperate.

Based on the methods similar to those utilized previously, we rewrite the system (\ref{eqn:main_dynamics}) by adding two more equations that reflect the evolution of $z_i$:
\begin{equation}
\left\{ \begin{array}{l}
 \frac{dz_1}{dt} =      (D_1-y_1-z_1) \epsilon \xi  \gamma (y_0+y_1+y_2) \cdot  \tilde Q_{1\rightarrow 0}^{\text{G}}             \\
+ (D_1-y_1-z_1)\epsilon\xi \alpha_c  (1-\gamma)                       									 \\
+   (D_1-y_1-z_1)\epsilon\xi \delta (1-\gamma) \cdot \tilde Q^{\text{D}}_{1\rightarrow0}  							\\
 -   z_1  \xi        \gamma (D_1-y_1-z_1)\cdot  \tilde Q_{0\rightarrow1}^{\text{G}} 						  \\
 - z_1\xi            \alpha_c  (1-\gamma)  											\\
 - z_1 \xi            \delta (1-\gamma)\cdot  \tilde Q^{\text{D}}_{0\rightarrow1},  							  \\
 \\
\frac{dz_2}{dt} =   (D_2-y_2-z_2)\epsilon \xi   \gamma (y_0+y_1+y_2)  \tilde Q_{2\rightarrow0}^{\text{G}} 				 \\
 + (D_2-y_2-z_2)\epsilon \xi              \alpha_c  (1-\gamma) 										 \\
 +  (D_2-y_2-z_2)\epsilon \xi   \delta (1-\gamma) 	\cdot  \tilde Q^{\text{D}}_{2\rightarrow0} 	                       \\
  - z_2  \xi            \gamma (D_2-y_2-z_2) \cdot  \tilde Q_{0\rightarrow2}^{\text{G}} 								\\
 -z_2 \xi          \alpha_c     (1-\gamma)											\\
 - z_2 \xi             \delta (1-\gamma) \cdot  \tilde Q^{\text{D}}_{0\rightarrow2} , 						  \\
\end{array} \right.
\nonumber
\end{equation}

\begin{equation}
\left\{ \begin{array}{l}
\frac{dy_1}{dt} =      (D_1-y_1-z_1)(1-\epsilon) \xi  \gamma (y_0+y_1+y_2) \cdot   Q_{1\rightarrow 0}^{\text{G}}            \\
+ (D_1-y_1-z_1)(1-\epsilon)\xi \alpha_c  (1-\gamma)                       									 \\
+   (D_1-y_1-z_1)(1-\epsilon)\xi \delta (1-\gamma) \cdot  Q^{\text{D}}_{1\rightarrow0}  							\\
 -   y_1  \xi        \gamma (D_1-y_1-z_1)\cdot  Q_{0\rightarrow1}^{\text{G}} 						 \\
 - y_1\xi            \alpha_c  (1-\gamma)  										\\
 - y_1 \xi            \delta (1-\gamma)\cdot  Q^{\text{D}}_{0\rightarrow1},  						 \\
 \\
 \frac{dz_1}{dt} =      (D_1-y_1-z_1) \epsilon \xi  \gamma (y_0+y_1+y_2) \cdot  \tilde Q_{1\rightarrow 0}^{\text{G}}             \\
+ (D_1-y_1-z_1)\epsilon\xi \alpha_c  (1-\gamma)                       									 \\
+   (D_1-y_1-z_1)\epsilon\xi \delta (1-\gamma) \cdot \tilde Q^{\text{D}}_{1\rightarrow0}  							\\
 -   z_1  \xi        \gamma (D_1-y_1-z_1)\cdot  \tilde Q_{0\rightarrow1}^{\text{G}} 						  \\
 - z_1\xi            \alpha_c  (1-\gamma)  											\\
 - z_1 \xi            \delta (1-\gamma)\cdot  \tilde Q^{\text{D}}_{0\rightarrow1},  							  \\
 \\
\frac{dy_2}{dt} =   (D_2-y_2-z_2)(1-\epsilon)\xi   \gamma (y_0+y_1+y_2)   Q_{2\rightarrow0}^{\text{G}} 				\\
 + (D_2-y_2-z_2)(1-\epsilon) \xi              \alpha_c  (1-\gamma) 										 \\
 +  (D_2-y_2-z_2)(1-\epsilon) \xi   \delta (1-\gamma) 	\cdot   Q^{\text{D}}_{2\rightarrow0} 						\\
  - y_2  \xi            \gamma (D_2-y_2-z_2) \cdot   Q_{0\rightarrow2}^{\text{G}} 							\\
 -y_2 \xi          \alpha_c     (1-\gamma)											\\
 - y_2 \xi             \delta (1-\gamma) \cdot  Q^{\text{D}}_{0\rightarrow2} , 						 \\
\\
\frac{dz_2}{dt} =   (D_2-y_2-z_2)\epsilon \xi   \gamma (y_0+y_1+y_2)  \tilde Q_{2\rightarrow0}^{\text{G}} 				 \\
 + (D_2-y_2-z_2)\epsilon \xi              \alpha_c  (1-\gamma) 										 \\
 +  (D_2-y_2-z_2)\epsilon \xi   \delta (1-\gamma) 	\cdot  \tilde Q^{\text{D}}_{2\rightarrow0} 	                       \\
  - z_2  \xi            \gamma (D_2-y_2-z_2) \cdot  \tilde Q_{0\rightarrow2}^{\text{G}} 								\\
 -z_2 \xi          \alpha_c     (1-\gamma)											\\
 - z_2 \xi             \delta (1-\gamma) \cdot  \tilde Q^{\text{D}}_{0\rightarrow2} , 						  \\
\\
\frac{dy_0}{dt} =   (D_0-y_0)\xi\gamma (y_0+y_1+y_2+z_1+z_2) \cdot  Q_{\varnothing \rightarrow 0} 			\\
+(D_0-y_0) \xi \alpha_c  (1-\gamma) 											 \\
-  y_0   \xi \cdot Q_{0\rightarrow \varnothing } ,
\\

\frac{dp_0}{dt}= c_{\text{price}}\frac{d(y_0+y_1+y_2)}{dt}p.\\
\end{array} \right.
\label{eqn:main_dynamics2}
\end{equation}
We note that all of the coefficients without tilde alter according to the updated spectral efficiencies, while those with tilde are calculated by the analogy with the above, but based on different throughputs.


\begin{figure*}[ht!]
\centering
\includegraphics[width=0.8\textwidth]{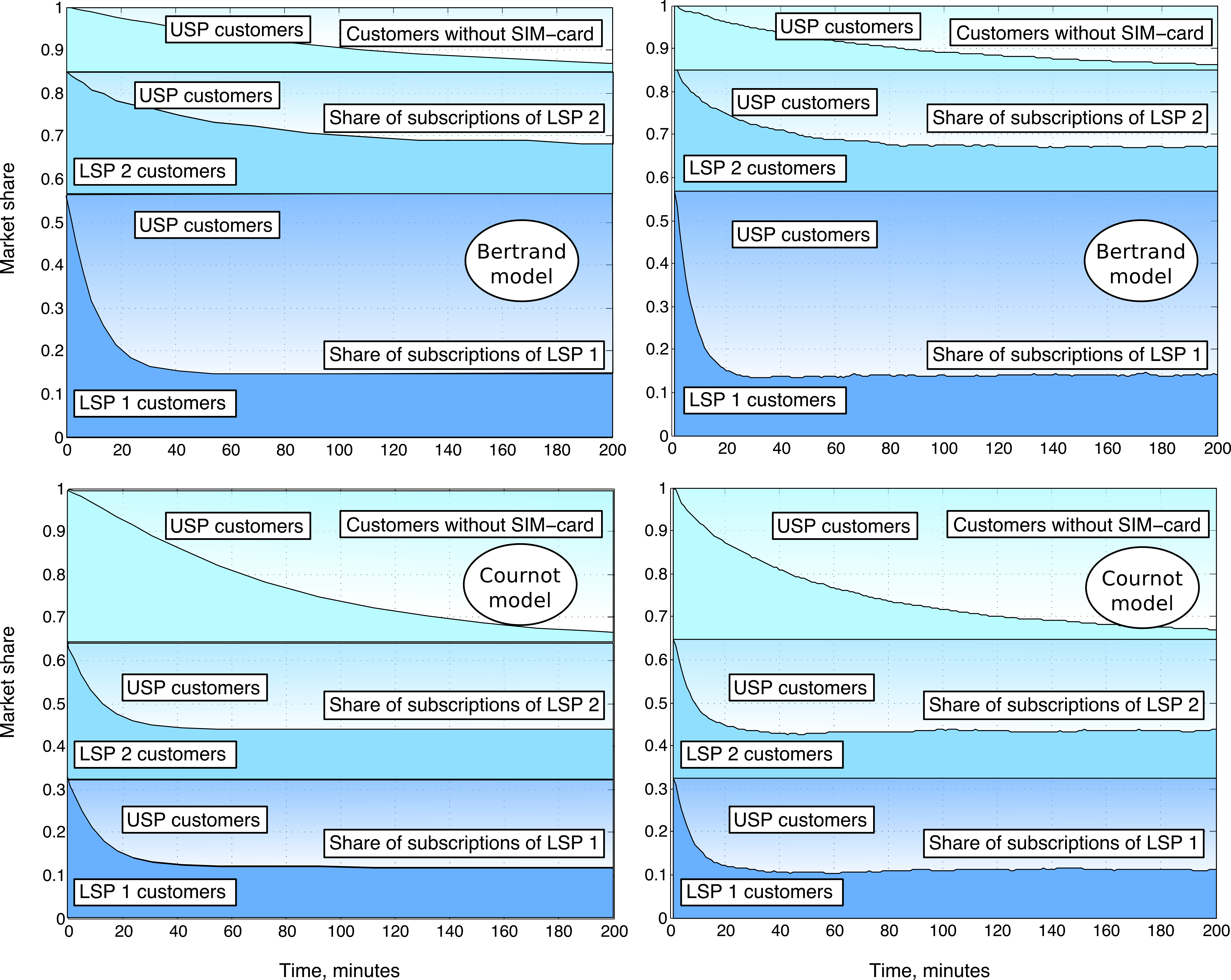}
\protect\caption{Shares of our market in dynamics: the Bertrand (top) and Cournot (bottom) games; analysis (left) and simulation (right).}
\label{fig:results_market}
\end{figure*}

\begin{table}[tbh]
\centering 
\caption{System evaluation parameters}
\begin{tabular}{|c|c|c|c|c|}
\hline \label{tab:params}
Name & Description/definition  & \multicolumn{3}{c|}{Value} \tabularnewline
\hline 
$R$  & Area of interest&  \multicolumn{3}{c|}{200m}\tabularnewline
\hline 
$\mu_0$  & Density of devices&  \multicolumn{3}{c|}{0.2 per m$^2$}\tabularnewline
$\mu$  & Density of users &  \multicolumn{3}{c|}{2 per m$^2$}\tabularnewline
$h_b$  & Average body height &  \multicolumn{3}{c|}{1.7m}\tabularnewline
$r_b$  & Average body radius &  \multicolumn{3}{c|}{0.2m}\tabularnewline
$h_d$  & Average device elevation &  \multicolumn{3}{c|}{1.2m}\tabularnewline
\hline
$\phi$ & Half of aperture&  \multicolumn{3}{c|}{10$^o$}\tabularnewline
$\beta_{\max}$ & Maximum beam inclination&  \multicolumn{3}{c|}{60$^o$}\tabularnewline
$\alpha$  & Antenna "viewing angle" &  \multicolumn{3}{c|}{270$^o$}\tabularnewline
\hline
$h_i$ & AAP altitude & 15m  & 15m & 30m \tabularnewline
$B_i $ & Bandwidth\footnotemark[1] & 2GHz  & 2GHz &  6GHz \\
$p_{tx} $ & Transmit power & \multicolumn{3}{c|}{$200$ mW}\tabularnewline
$f_i $& Frequency &  28GHz & 28GHz & 60GHz \tabularnewline
$N_0$  & Noise plus interference &  \multicolumn{3}{c|}{-80dBm}\tabularnewline
$G_{\text{NLOS}}$  & Reflected beam path decrease &  \multicolumn{3}{c|}{-30dB}\tabularnewline
$\text{SNR}_{\max}$  & Maximum SNR &  \multicolumn{3}{c|}{20dB}\tabularnewline
\hline 
$\theta_{\max}$& Maximum "taste"\footnotemark[2] & \multicolumn{3}{c|}{234.03 }\tabularnewline
$T^{m}$& Maximum throughput & \multicolumn{3}{c|}{100 }\tabularnewline
$a,b,c$& Quality function parameters& \multicolumn{3}{c|}{(12.3600, 0.5619, 0.0098)}\tabularnewline
$\nu$&Costs per unit of demand \footnotemark[3]& \multicolumn{3}{c|}{ 0.0647}\tabularnewline
\hline
 $\gamma$&"Gossiping" factor & \multicolumn{3}{c|}{ 0.05}\tabularnewline
 $\alpha_c$&"Curiosity" factor & \multicolumn{3}{c|}{ 0.05}\tabularnewline
 $\delta$&"Dissatisfaction" factor & \multicolumn{3}{c|}{ 0.1}\tabularnewline
  $c_{\text{price}}$& USP price update& \multicolumn{3}{c|}{ 0.1}\tabularnewline
   $c_{u}$& Utility "impact" coefficient& \multicolumn{3}{c|}{ 0.1}\tabularnewline
\hline
\end{tabular}
  \begin{tablenotes}
\raggedright
\item[1] Maximum bandwidth purchased by the LSP may be decreased according to the difference in the optimal costs at the initial-stage game. For the final throughput, the value in Hz is multiplied by $0.5$ to obtain the "effective bandwidth".\\
\item[2] Derived under the assumption that the "richest" customer is ready to pay not more than $200$\$ per month (i.e., $0.46$ cents per minute).\\
\item[3] Derived assuming that the annual license costs 50 million dollars per 2 GHz.

\end{tablenotes}

\end{table}

\section{Numerical analysis of our system}

\subsection{Implemented modeling environment}
In order to carefully model the dynamic evolution of the system in question, we are not only relying on the above game theoretic tools, but also conduct a comprehensive system-level simulation assessment by building our own evaluation platform with suitable scaling performance to address our large-scale scenario. The developed simulation platform is based on our recent rigorous mmWave study in~\cite{Gal16} and additionally implements several unique features that allow it to capture the intricate interactions between our studied entities:

\begin{itemize}
\item The simulator utilizes a rescaled framing for medium access control, which enables operation across larger time intervals without affecting the system efficiency.
\item The antenna beamforming and body reflection effects are modeled explicitly by applying a mmWave-specific channel model based on the ray-tracing results and real measurements.
\end{itemize}


The investigated short-term dynamics of the aerial mmWave access market is assessed for a characteristic scenario with two LSPs and one USP collocated in a square area of interest with the side length \textcolor{black}{$R = 200$m}. Each of these market players has a fleet of mmWave AAPs (drone cells) uniformly deployed over the area, thus forming a regular hexagonal grid. All AAPs of the LSPs are placed at the same altitude $h_i = 15$~m, while the AAPs of the USP occupy $h_0 = 30$~m, which results in expected numbers of $18$ and $5$ AAPs, respectively. The number of (potentially) active customers is $N=8,000$ (the total number of participants is $80,000$), who are distributed uniformly across this square area. At every instant of time, each active customer is served by its closest AAP of the service provider it is subscribed to. One unit of time corresponds to one minute. For the complete list of the system settings, we refer to Table~\ref{tab:params}.

\subsection{Quantifying market dynamics} \label{sec:5.1}
We begin with investigating the overall dynamics of our subject market, where the initial market shares of the LSPs are determined according to the Bertrand (BM) and the Cournot models (CM) above. These alternatives represent two extreme cases of possible market alignment corresponding to the differentiated vs. identical product offers, respectively, and lead to dissimilar conclusions on system performance. Our first result addresses the case when two independent LSPs observe the consequences of the USP intrusion into "their" equilibrium market and capture the temporal evolution of the \textit{market shares} illustrated here for the scenario with $\alpha_{c}=0.0083$ (equivalent to a period of 2 hours), $\gamma=0.05$, $\delta=0.1$, and the initial price $p_0=p_2/2$. To this end, Fig.~\ref{fig:results_market} indicates time (in minutes) along the horizontal axis, while the vertical axis reports the distribution of the shares of the customers for the market players. On the right side of this plot, we append our simulation results to validate the analysis on the left side.

The key difference that is revealed when comparing the BM and the CM cases of the initial-stage modeling is in their respective market shares (that is, at time $t=0$). For the CM, the LSP~1 and the LSP~2 have equal shares of subscribers (i.e., $0.33$), while the remaining market share comprises the customers with no SIM-cards (not connected to either LSP). In contrast, for the BM, the LSP~1 has a larger initial market share (i.e., $0.58$) and the LSP~2 has a smaller one (i.e., $0.29$). Further, we observe that the evolution of these shares has similar trends for both cases: the USP is able to acquire significant shares of the market not only by activating the customers with no SIM-card, but also winning the customers of the LSPs. The dynamics of this process changes over time by converging to a stable balance for the LSPs in about 1 hour. Given the excellent match between our analytical and simulation results, we focus in the remainder of this section only on the findings of our mathematical analysis.


We continue by investigating the \textit{profit evolution} (including spectrum costs and fees paid by the customers). To this end,  Fig.~\ref{fig:results_param} studies the impact of such factors as consumer "gossiping", "curiosity", and "dissatisfaction" over a realistic range of values (lighter curves). We also report more "extreme" results (darker curves) to highlight the potential amplitude of the profit. Our first observation is that only user "curiosity" has a significant effect on the lowest-profitable USP and the second-profitable LSP~2 (in the BM). Specifically, it increases the profit of an LSP with the growing $\alpha_c$ as well as alters the convergence time. The latter trend holds for the "dissatisfaction" factor as well, but an increase in $\delta$ for the BM (but not the CM) results in a lower profit of the highest-price LSP~1 (since both the announced quality level and the price are higher). This reveals an important difference between the BM and the CM in terms of the USP "intrusion". The last remaining "gossiping" factor $\gamma$ may significantly lower the profit of both LSPs (as shown by the "extreme" line), which implies that the more the customers interact, the higher utility they can achieve. This feature -- in a way similar to implicit customer cooperation -- is clearly beneficial for them, and hence suggests a hypothesis on that some level of cooperation between the LSPs may also bring value. Therefore, we shift our focus to the cooperation of the LSPs in what follows.

\begin{figure*}[ht!]
\centering
\includegraphics[width=2\columnwidth]{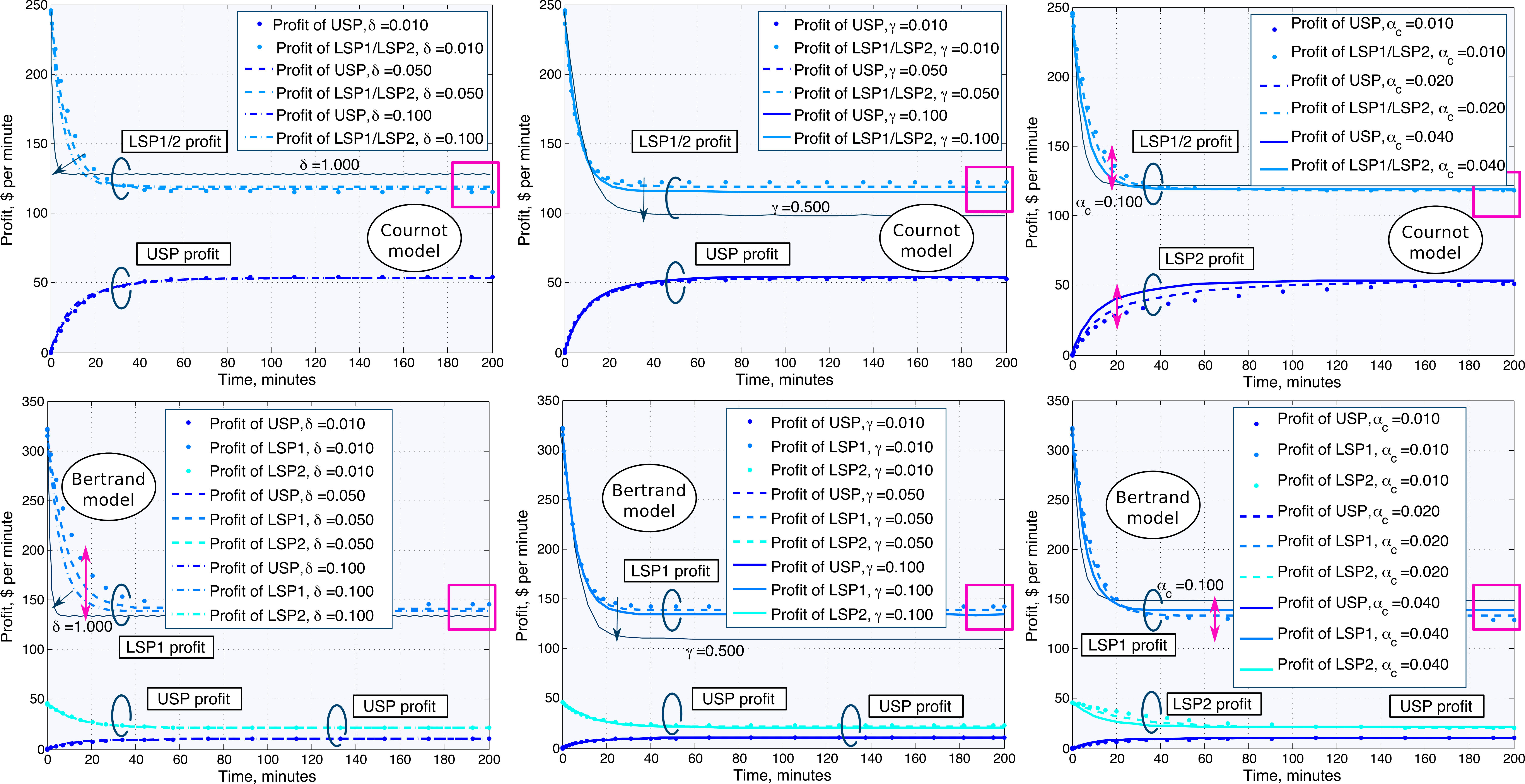}
\protect\caption{Profit of our main market players in dynamics: the Cournot (top) and Bertrand (bottom) games; with "dissatisfaction" (left), "gossiping" (middle), and "curiosity" (right) factors.}
\label{fig:results_param}
\end{figure*}

\subsection{Effects of the LSP cooperation}
\label{sec:5.2}

\begin{figure*}[ht!]
\centering
\includegraphics[width=2.15\columnwidth]{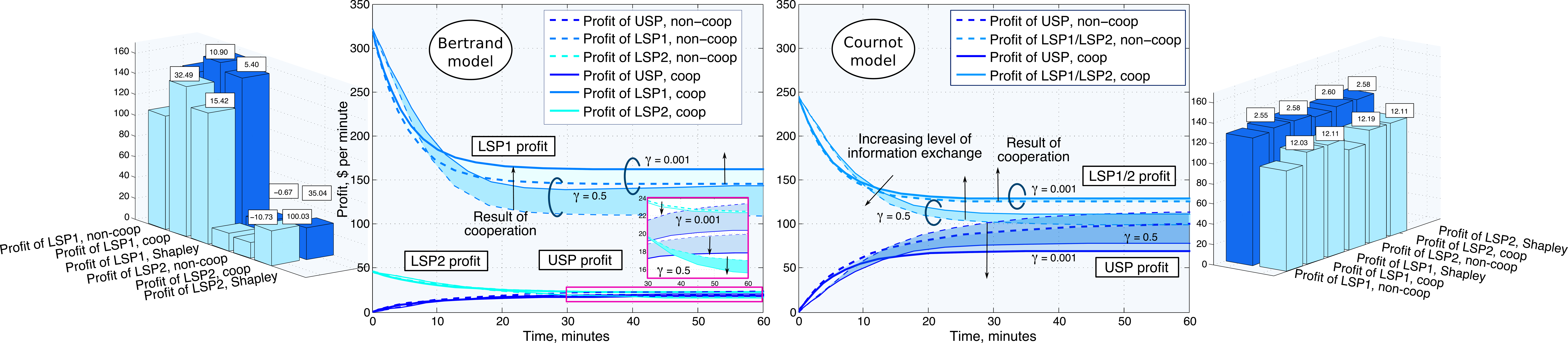}
\protect\caption{Consequences of the LSP cooperation in dynamics: the Bertrand (left) and Cournot (right) games; arrows follow changes in profit due to cooperation.}
\label{fig:results_coop}
\end{figure*}

In this subsection, we characterize the market impact of cooperation between the LSPs, which aims to help serving their customers more efficiently. Accordingly, Fig.~\ref{fig:results_coop} illustrates the temporal evolution of the LSP profits for the two extreme values ($0.001$ and $0.5$) of the "gossiping" factor $\gamma$ with and without cooperation. In the plot, an increase in the profit due to the LSP cooperation is highlighted by the filled areas. Interestingly, we observe differences between the short- and the long-term dynamics for both the BM and the CM cases (see e.g, $t<8$ and $t>8$). It becomes apparent that cooperation always benefits the LSP~1, while the LSP~2 wins only for the CM. On the contrary, for the BM, cooperation may result in a lower LSP~2 profit, which suggests the importance of financial compensation from the winning LSP~1. From the theoretical viewpoint, it translates into the need for a fair utility sharing. Hence, the aggregate profit could be split between the LSPs according to e.g., the Shapley value to enforce fairness of cooperation as it is shown by the bar graphs in Fig.~\ref{fig:results_coop}. 


\section{Summary and conclusions}
The unique modeling framework proposed in this paper makes a decisive step towards understanding the novel market around the mmWave access systems as part of 5G landscape. The unexpected and temporary events featuring masses of people constitute particularly challenging study cases in this area due to unprecedented bandwidth requirements that can only be satisfied with emerging radio technologies, such as aerial mmWave access points. In our systematic performance characterization, a vertically differentiated market proved to be particularly suitable to address this novel setting, by modeling customers that have different preferences with respect to the access service quality. 

To comprehensively model the competition among dissimilar players in the subject market, we utilized the Bertrand and Cournot games, which lead to drastically different performance results. More specifically, in the Cournot model we observe that the market is equally shared between the two LSPs (licensed-band service providers), with subsequent equal profit sharing. Here, the total market profit turns out to be higher than that in the Bertrand model, where instead a clear differentiation between the two LSPs occurs in terms of profits. The other side of the coin is that the Cournot model leads to a lower surplus as well as to a smaller number of served customers, which may become negative factors with respect to the long-term customer loyalty.

Another lesson learned as a result of our analysis is in that the factors of the customer behavior, such as "gossiping", "curiosity", and "dissatisfaction", yield various consequences for the access market dynamics and the resulting profits, as well as depend on whether the Bertrand or Cournot game is played initially. Whenever the customers enjoy high levels of interaction with each other, they can collectively improve their utility by dynamically adapting their service provider choices. At that time, we study the presence of the USP (unlicensed-band service provider) on the market in question, mindful of customer adaptation dynamics. Competing against the USP, the LSPs may also engage into mutual cooperation to better satisfy the customer demands and thus increase their profits. Here, the need for a fair sharing of the resulting profit is essential to guarantee benefits for both cooperating LSPs, and we adopt the Shapley value for that purpose.

Concluding this work, we are convinced that our first-hand game theoretic modeling conducted in this paper should become a useful reference point for future discussions on the dynamic 5G market environments. Multiple new research directions may stem from our present contribution along the lines of dynamic deployment and operation optimization of the AAPs (aerial access points), profit maximization studies across a range of service and business models that enable customer-driven decision making, as well as further game theoretic analysis of alternative taste parameter distributions and market player strategies.

\appendix

\subsection{Blocking probability illustration}

\begin{figure} [!ht]
  \begin{center}
    \includegraphics[width=0.5\textwidth]{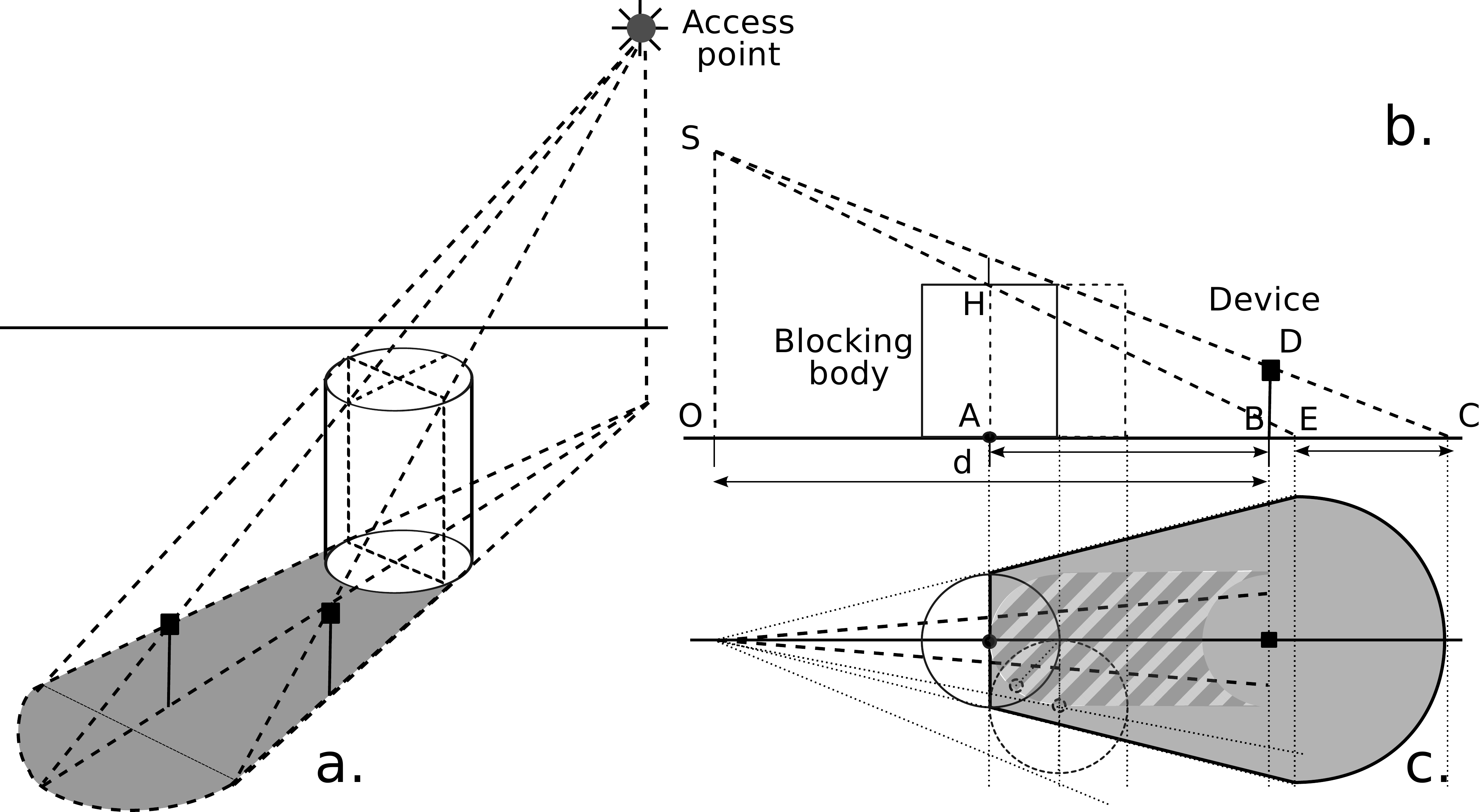}
  \end{center}
  \caption{Illustration of mmWave blockage probability.}
  \label{fig:block}
\end{figure} 

Let us establish here the probability of blockage by a cylindric object, if the device is located at the distance $d$ from the AAP. We assume that the mmWave radio "shadow" is equivalent in its shape and size to the one that visible light creates, and that the LOS is blocked \textit{iff} the axis of symmetry of the radio beam is blocked (see SC in Fig.~\ref{fig:block}.b). Then, the closest blocking body may be located at the point $A$, while the device is placed at the point $B$, and the locus of centers of all the possible blockers is indicated by a hatched region in Fig.~\ref{fig:block}.c of area $S_b \approx 2  r_b \cdot AB $ (we assume that no center can be placed closer than $r_b$). Here, $2  r_b$ is the width of the cylinder representing the human body, $ r_{\max} = d \tan \phi$, $\phi$ is a half of the beam angle, and $AB$ is the longest distance between the device and its blocker. The latter may be calculated based on the properties of similar triangles:
\begin{equation}
 \begin{array}{c}
\frac{h-h_d}{d} = \frac{h_b-h_d}{AB-r_b} ,
\end{array} 
\nonumber
\end{equation} 
where $h$, $h_b$, and $h_d$ are the altitude of AAP, the blocker height, and the device elevation, respectively. Given the above, we obtain:
\begin{equation}
 \begin{array}{c}
 AB = d \frac{h_b-h_d}{h-h_d}+r_b \Rightarrow S_b = 2r_b\cdot d \left( \frac{h_b-h_d}{h-h_d} \right).
\end{array} 
\nonumber
\end{equation} 

We note that the latter expression is valid if the center of the blocker is located between the projection $O$ and the blocked device, and we disregard the cases when the shadow is "egg-shaped". Due to our assumption on the PPP of the blockers, the probability of not being blocked may be produced as follows:
\begin{equation}
 \begin{array}{c}
\Pr\{\text{no blockers in the area}\} \approx e^{-\mu \cdot 2r_b\left( d\frac{h_b-h_d}{h-h_d}\right)},
\end{array} 
\nonumber
\end{equation} 
where $\mu$ is the density of all the participants on the ground of the large-scale event in question.

\subsection{Area of beam on the ground}

\begin{thm}
Consider an AAP at the altitude $h$, its closest device, and a device located at the distance $d$. Then, for the beam aperture $2\phi$, the area covered by the beam under the AAP equals $S_0 = \pi \left (h \tan {(\phi)} \right)^2$. Further, the ratio between the areas $S_0$ and $S(d)$, produced by the device located at the distance $d$, equals: 
\begin{equation}
 \begin{array}{c}
\frac{S(d)}{S_0} = \frac{\cos (\phi)}{\sqrt{\cos^2(\phi)-\frac{d^2}{d^2+{h^2}}}}.
\end{array} 
\end{equation}
\end{thm}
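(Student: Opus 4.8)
The plan is to treat the beam as a right circular cone of half-angle $\phi$ with apex at the AAP and to read off both areas as conic-section footprints on the ground plane. The first claim is immediate: when the beam points straight down, the apex sits at height $h$, every generator makes angle $\phi$ with the vertical axis, and the intersection with the plane $z=0$ is a disk of radius $h\tan\phi$, so $S_0=\pi(h\tan\phi)^2$. For the tilted beam I would place the AAP at $(0,0,h)$ and the tagged device at $(d,0,0)$, so that the beam axis makes an angle $\beta$ with the vertical with $\tan\beta=d/h$, hence $\sin^2\beta=d^2/(d^2+h^2)$. The footprint is the intersection of this cone with $z=0$, which is a bounded ellipse exactly when the plane meets a single nappe, i.e. when $\phi+\beta<\pi/2$, equivalently $\cos^2\phi-\sin^2\beta>0$; the physical restriction $\beta\le\beta_{\max}<\pi/2$ for a sufficiently narrow beam keeps us in this regime.

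The core step is to extract the semi-axes of this footprint ellipse. I would parametrize the generators of the cone by the azimuth $\psi$ about the axis, intersect each generator with $z=0$, and track the ground coordinates $(x(\psi),y(\psi))$. The two generators lying in the tilt plane ($\psi=0,\pi$) land at $x=h\tan(\beta\pm\phi)$, giving the major axis, while the transverse extent follows by maximizing $|y(\psi)|$, a single-variable optimization whose stationarity condition reduces to $\cos\psi=\tan\phi\tan\beta$. The computation then hinges on the product-to-sum simplification
\[
\cos^2\phi-\sin^2\beta=\cos(\phi+\beta)\cos(\phi-\beta),
\]
which collapses the transverse half-width to $h\sin\phi/\sqrt{\cos^2\phi-\sin^2\beta}$. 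Dividing by the undeflected nadir radius $r_0=h\tan\phi=h\sin\phi/\cos\phi$ yields the magnification factor $\cos\phi/\sqrt{\cos^2\phi-\sin^2\beta}$, and substituting $\sin^2\beta=d^2/(d^2+h^2)$ gives precisely the claimed ratio once $S(d)$ is referred to the nadir disk through this footprint stretching.

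The main obstacle is the conic-section bookkeeping rather than any conceptual difficulty: one must pick an orthonormal frame aligned with the tilted axis so the generator intersections stay tractable, carry out the azimuthal extremization without sign errors, and recognize the trigonometric identity above that produces the key radicand $\cos^2\phi-\sin^2\beta$. A secondary point worth verifying is the ellipse (bounded-footprint) regime implied by $\beta\le\beta_{\max}$, since as $\sin\beta\to\cos\phi$ the denominator vanishes and the footprint degenerates into an unbounded parabola; this limit, together with the $d=0$ check that returns ratio $1$, serves as a convenient sanity test of the final expression.
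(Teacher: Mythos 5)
Your geometric setup and the computation of the footprint's transverse half-width are correct: parametrizing the generators of the tilted cone and extremizing $|y(\psi)|$ (with the stationarity condition $\cos\psi=\tan\phi\tan\beta$) does give the semi-minor axis $B=h\sin\phi/\sqrt{\cos^{2}\phi-\sin^{2}\beta}$, and $B/(h\tan\phi)=\cos\phi/\sqrt{\cos^{2}\phi-\sin^{2}\beta}$ is exactly the expression in the theorem. The gap is in your last step: that quotient is a \emph{one-dimensional} magnification of a single semi-axis, not a ratio of areas. By your own in-plane computation the other semi-axis is
\[
A=\tfrac{h}{2}\bigl[\tan(\beta+\phi)-\tan(\beta-\phi)\bigr]=\frac{h\sin\phi\cos\phi}{\cos^{2}\phi-\sin^{2}\beta},
\]
which is also inflated (by the \emph{square} of the same factor), so carrying your argument through consistently gives
\[
\frac{S(d)}{S_0}=\frac{\pi AB}{\pi (h\tan\phi)^{2}}=\frac{\cos^{3}\phi}{\left(\cos^{2}\phi-\tfrac{d^{2}}{d^{2}+h^{2}}\right)^{3/2}},
\]
the cube of the stated ratio. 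The phrase ``referring $S(d)$ to the nadir disk through this footprint stretching'' silently assumes that the second semi-axis stays at $h\tan\phi$, which your own major-axis formula contradicts.

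The paper reaches the stated formula by a different and cruder route: it \emph{postulates} that the tilted footprint is an ellipse whose minor semi-axis equals the nadir radius $a=h\tan\phi$ and whose major semi-axis is $b=a/\sqrt{1-e^{2}}$, taking the eccentricity from the classical conic-section formula $e=\cos\psi/\cos\phi$ with $\tan\psi=h/d$; the ratio $S/S_0=b/a=1/\sqrt{1-e^{2}}$ then drops out in one line. In other words, the theorem's expression is itself a pinned-minor-axis approximation of the exact conic-section area, and your more careful coordinate computation exposes that. To arrive at the statement as written you would have to import that same assumption explicitly; without it your method yields the $3/2$-power expression instead. Note also that your proposed sanity checks ($d=0$ giving ratio $1$, and divergence as $\sin\beta\to\cos\phi$) are passed by both expressions, so they do not detect the discrepancy.
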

\begin{proof}
Assuming that the area directly under the AAP is a circle or radius $a$, so that $S_0 = \pi a^2$, the area $S$ may be established as $\pi a \cdot b = \pi a^2/ \sqrt{1-e^2}$, where $a = h \tan {(\phi)}$ and $b$ are the minor and the major radius of the ellipse, respectively, $e$ is its eccentricity, and $\phi$ is a half of the beam aperture. We note that the eccentricity of the ellipse representing the result of conic section by a plane $e = \cos \psi/ \cos \phi$, where $\psi$ is the angle between the plane (in our case, the ground) and the cone symmetry axis, that is:
\begin{equation}
 \begin{array}{c}
\tan {(\psi)} = \frac{h}{d} \Rightarrow \cos^2{(\psi)} = \frac{1}{1+\frac{h^2}{d^2}}.
\end{array} 
\end{equation}

Therefore, we may obtain the following ratio:
\begin{equation}
 \begin{array}{c}
\frac{S}{S_0} = \frac{1}{\sqrt{1-\frac{\frac{1}{1+\frac{h^2}{d^2}}}{\cos^2{(\phi)} }}}=
\frac{\cos (\phi)}{\sqrt{\cos^2(\phi)-\frac{d^2}{d^2+{h^2}}}}.
\end{array} 
\end{equation}
\end{proof}

\subsection{Probability of the reflected path}
We assume that in case the LOS link does not exist, the mmWave connection between the AAP and its associated device may be supported via a signal reflected from e.g., another human body. Excluding the blocked device and its possible blocker, we need to estimate whether any other human body is present within the area covered by the directed beam. Utilizing our assumption on the PPP for the devices, we may establish the probability of NLOS path as follows:
\begin{equation}
 \begin{array}{c}
\!q_{\text{NLOS}}(d)\! = \! \Pr \{\text{NLOS at $d$}\}\! =\!
1\!-\!exp\left({- \mu \textcolor{black}{\frac{\pi \left (h\tan{(\phi)} \right)^2 \cdot \cos (\phi)}{\sqrt{\cos^2(\phi)-\frac{d^2}{d^2+{h^2}}} }}} \right)\!.
\end{array} 
\end{equation}

\section*{Acknowledgment}
This work is supported by Intel Corporation and the Academy of Finland. 

\bibliographystyle{ieeetr}
\bibliography{main21_arxiv}

\end{document}